\theoremstyle{theorem}
\newtheorem{theorem}{Theorem}[section]
\newtheorem{lemma}{Lemma}
\newtheorem{remark}{Remark}
\newtheorem{definition}{Definition}
\newtheorem{assumption}{Assumption}
\newtheorem{problem}{Problem}
\newcommand{\Rmnum}[1]{\expandafter\@slowromancap\romannumeral #1@}
\newcommand\addtag{\refstepcounter{equation}\tag{\theequation}}
\DeclareMathOperator*{\esssup}{ess\,sup}
\DeclareMathOperator*{\argmax}{arg\,max}
\newcommand{\smallO}[1]{o{\left(#1\right)}}
\begin{document}
\IEEEoverridecommandlockouts

\title{Quickest Change Detection with a Censoring Sensor in the Minimax Setting}

\author{Xiaoqiang $\mathrm{Ren}^{\star}$, Jiming $\mathrm{Chen}^{\ddag}$, Karl H. $\mathrm{Johansson}^{\dag}$ and Ling $\mathrm{Shi}^{\star}$

\thanks{\indent $\star$: Department of Electronic and Computer Engineering,
Hong Kong University of Science and Technology, Kowloon, Hong Kong. Emails: xren,~eesling@ust.hk.

        \indent $\ddag$: Institute of Industrial Process Control, Department of Control Science and Engineering, Zhejiang University, Hangzhou, China. E-mail: jmchen@iipc.zju.edu.cn.

        \indent $\dag$: ACCESS Linnaeus Center, School of Electrical Engineering, Royal Institute of Technology, Stockholm, Sweden. Email: kallej@ee.kth.se.
}}
\maketitle

\begin{abstract}
The problem of quickest change detection with a wireless sensor node is studied in this paper. The sensor that is deployed to monitor the environment has limited energy constraint to the classical quickest change detection problem. We consider the ``censoring" strategy at the sensor side, i.e., the sensor selectively sends its observations to the decision maker. The quickest change detection problem is formulated in a minimax way. In particular, our goal is to find the optimal censoring strategy and stopping time such that the detection delay is minimized subject to constraints on both average run length (ARL) and average energy cost before the change. We show that the censoring strategy that has the maximal post-censoring Kullback-Leibler (K-L) divergence coupled with Cumulative Sum (CuSum) and Shiryaev-Roberts-Pollak (SRP) detection procedure is asymptotically optimal for the Lorden's and Pollak's problem as the ARL goes to infinity, respectively. We also show that the asymptotically optimal censoring strategy should use up the available energy and has a very special structure, i.e., the likelihood ratio of the \emph{no send} region is a single interval, which can be utilized to significantly reduce the computational complexity. Numerical examples are shown to illustrate our results.

\medskip
Keywords: censoring, quickest change detection, minimax, CuSum, SRP procedure, asymptotically optimal, Kullback-Leibler divergence.
\end{abstract}

\section{Introduction}
The problem of quickest change detection aims to detect an abrupt change in stochastic processes as soon as possible. This arises in a wide range of applications including quality control engineering \cite{lai1995sequential}, finance \cite{shiryaev2002quickest}, computer security \cite{thottan2003anomaly, cardenas2009evaluation}, condition monitoring \cite{rice2010flexible} and cognitive radio networks \cite{jayaprakasam2009sequential}. In these applications, a change of the underlying distribution usually indicates that the event we are interested in occurs, and in order to take actions, we need to detect such occurrence as soon as possible. Generally speaking, the design of quickest change detection procedures, mainly involves two performance indices: detection delay and false detection. Usually, one seeks to find the detection procedure that minimizes detection delay subject to certain false detection constraint. The mathematical characterization of these two indices and the model assumption distinguishes two problem formulations: the Bayesian one due to Shiryaev \cite{Shiryaev1963, shiryaev2007optimal} and the minimax one due to Lorden \cite{lorden1971procedures} and Pollak \cite{pollak1985optimal}. The strict optimal or asymptotically optimal detection procedures have been established for these problems; see surveys \cite{polunchenko2012state, veeravalli2012quickest}.

Quickest change detection with wireless sensor networks (WSNs) has raised recent interest \cite{Veeravalli2001, mei2011quickest, banerjee2012dataBayesian, banerjee2012data, Geng2013}. The reasons are twofold: one is due to the good inherent properties the WSNs possess, such as being flexible and robust, and the other one is that the quickest change detection procedure can be carried out with WSNs in many applications including infrastructure health monitoring, habit monitoring and intrusion detection. The limited resources (limited energy for battery-powered sensor nodes and limited bandwidth for communication channels) associated with WSNs, however, brings new challenges. In the setting of classical quickest change detection, it is assumed that the decision maker can access the observation at each time instant and the sampling cost is free, whereas one has to take the energy and bandwidth constraints into consideration for detection with WSNs and the decision maker usually can only access part of the observation. In summary, the quickest change detection with WSNs needs to deal with trade-offs among three performance indices: detection delay, false detection and energy cost.

There have been several proposals on the quickest change detection with energy constraint; see \cite{premkumar2008optimal, banerjee2012dataBayesian, banerjee2012data}. The existing approaches admit the following two features. One is that the energy constraint is characterized by the sampling cost, i.e., the number of observations made. To meet this energy constraint, the observations are taken only at certain time instants. The other feature is that the decision whether or not to take a sample is made at the fusion center based on the detection statistic.
To cope with the energy constraint, we consider ``censoring" strategy at the sensor nodes instead. Specifically, the sensor nodes take observations at each time instant, but only send those that are deemed as ``informative" to the decision center. The benefits to adopt censoring strategy at the sensor nodes are summarized as follows.
On one hand, in most practical applications, the energy consumption of sensing is negligible compared with that of communication \cite{dargie2010fundamentals}, so it is effective to reduce the total energy consumption by reducing the number of communications. On the other hand, if the decision whether or not to take observations is made based on the detection statistic at the fusion center, the feedback information from the center to the sensor nodes will be needed, which can cause additional energy consumption and bandwidth cost.

In this paper, the minimax problem formulation of quickest change detection with a censoring sensor node is studied. Like the classical minimax formulation, we consider both Lorden's and Pollak's problem and try to find the optimal censoring strategy and stopping time such that the detection delay is minimized subject to constraints on both average run length (ARL) and average energy cost before the change. The main contributions of our work are summarized as follows.
\begin{enumerate}
  \item To the best of our knowledge, this paper is the first work that studies the minimax formulation of quickest change detection with a sensor that adopts censoring strategy. It is shown in our numerical example that the censoring strategy provides a very good trade-off between the detection performance and the energy constraint.
  \item We show that the censoring strategy that has the maximal post-censoring Kullback-Leibler (K-L) divergence coupled with Cumulative Sum (CuSum) and Shiryaev-Roberts-Pollak (SRP) detection procedure is asymptotically optimal for both Lorden's and Pollak's problem as the ARL goes to infinity, respectively. (Theorem \ref{Theorem:AsymOptCuSum} and Theorem \ref{Theorem:AsymOptSRP})
  \item In general, to find the asymptotically optimal censoring strategy that maximizes the post-censoring K-L divergence can only be done numerically. The computation burden of searching over the whole admissible class is huge, especially when the dimension of the observation is high. To alleviate the computation burden, we provide two necessary conditions on the asymptotically optimal censoring strategy. One is that it should use up the available energy (Theorem \ref{Theorem:EqualGreaterKL}), the other is that it has a very special structure, i.e., the likelihood ratio of the \emph{no send} region is a single interval (Theorem \ref{Theorem:likelihoodratio}).

\end{enumerate}


The related literature are summarized as follows. The idea of detection with censoring strategy is introduced in \cite{rago1996censoring} and later studied in \cite{appadwedula2008decentralized} and \cite{Tay2007}. The main result of \cite{rago1996censoring} and \cite{appadwedula2008decentralized} is that the likelihood ratio of the censoring region is a single interval under several different performance indices. The asymptotic detection performance of large-scale censoring networks is studied in \cite{Tay2007}. Premkumar and Kumar \cite{premkumar2008optimal} considered Bayesian quickest change detection with sleep/awake control of the sensor nodes. The energy constraint is formulated as the average number of sensors used and the problem is solved by formulating it as an infinite-horizon Markovian decision process problem. Banerjee and Veeravalli \cite{banerjee2012dataBayesian} studied a similar problem but with one sensor node. The authors provided an asymptotically optimal low-complexity stopping rule. The same authors studied the minimax problem with the same energy formulation in \cite{banerjee2012data}. They proposed a heuristical ``DE-CuSum" (``data efficient" CuSum) algorithm and proved that the algorithm is asymptotically optimal. Geng and Lai \cite{Geng2013} studied minimax quickest change detection with a sensor that can be recharged with energy harvesting techniques. The authors proposed a very simple asymptotically optimal power allocation scheme.

The remainder of this paper is organized as follows. The mathematical model of the considered problem is given in Section \ref{Section:problem setup}. In Section \ref{Section:Main results}, we show the main results of this paper. First we prove that the asymptotically optimal censoring strategy for both Lorden's and Pollak's problem is the one that has the maximal post-censoring K-L divergence. Then two properties of the asymptotically optimal censoring strategy are shown, i.e., that it uses up the available energy and the likelihood ratio of the \emph{no send} region is a single interval. Numerical examples are given in Section \ref{Section:Numerical Example} to illustrate the main results. Some concluding remarks are presented in the end.

\textit{Notations}: $\mathbb{N}$, $\mathbb{N}_{+}$, $\mathbb{R}$, $\mathbb{R}_{+}$ and $\mathbb{R}_{++}$ are the set of non-negative integers, positive integers, real numbers, non-negative real numbers and positive real numbers, respectively. $k\in \mathbb{N}$ is the time index. 
$\mathbf{1}_{A}$ represents the indicator function that takes value $1$ on the set $A$ and $0$ otherwise.
 $\times$ stands for the Cartesian product and $\text{Pr}(\cdot)$ denotes the probability.

\section{Problem Setup} \label{Section:problem setup}
We consider the optimal censoring strategy and detection scheme for a wireless sensor system that adopts censoring strategy. A remote sensor is deployed to take observations from the environment at each time instant and selectively sends them to a center that makes decisions to continue or declare a change of the monitored environment sequentially. By ``selectively", we mean that due to limited resources (e.g., energy and bandwidth), the sensor cannot send its observations all the time.
To make the most of the limited energy and achieve better detection performance, it is natural to come up with a strategy that only sends ``informative" data and discards the ``less informative" ones, which is the concept of censoring.
For detection structure, we consider the scenario corresponding to case A in \cite{veeravalli2001decentralized}, where the sensor has no local memory and does not receive feedback from the center, i.e., the sensor makes decisions whether or not to communication with the center only based on its current observation; see Fig. \ref{Fig:topology}. The reasons why we consider the scenario where there is only one sensor being used are as follows. On one hand, if we assume the observations are independent across the different sensors, the multiple sensors case can be easily reduced to the one sensor case for the problem we study, so for brevity of presentation, the one sensor case is preferred. On the other hand, if the observations taken from different sensors may be correlated, the problem of quickest change detection with energy constraints will be much more complicated, which we shall leave to the future work. In fact, most related literature studied the one sensor case, such as \cite{banerjee2012dataBayesian, banerjee2012data, Geng2013, geng2013bayesian }.  In the following, we introduce the mathematical problem.

\begin{figure}
  \centering
  \includegraphics[width=2.5in]{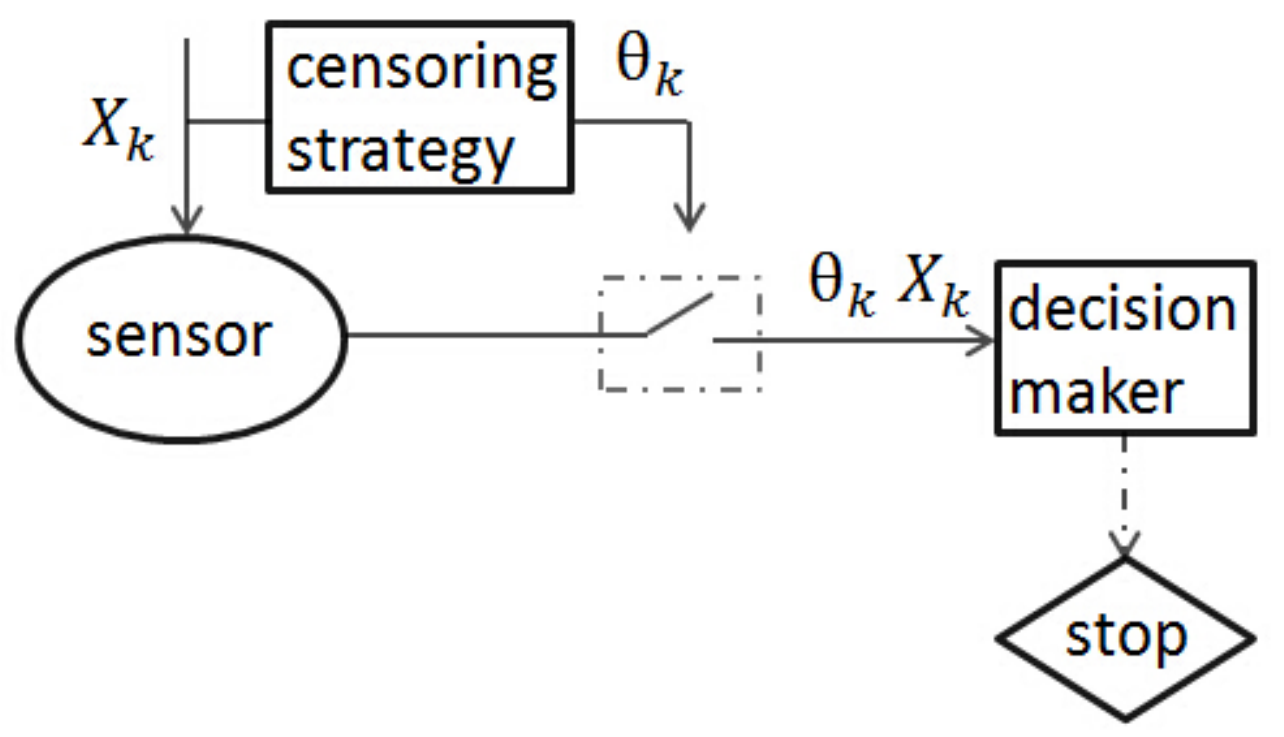}\vspace{-2mm}
  \caption{ Block diagram of the quickest change detection system. } \label{Fig:topology}
  \vspace{-4mm}
\end{figure}

\subsection{Classical Minimax Quickest Change Detection}
Let $X_k$ be the observation at time $k$. Along the time horizon, a sequence of observations $\{X_k\}_{k\in\mathbb{N}_{+}}$ about the monitored environment are taken locally at the sensor. Let $\{X_k\}_{k\in\mathbb{N}_{+}}$ be a series of random variables defined on the probability space $(\Omega, \mathcal{F}, \mathbb{P})$.   Assume that before an unknown but not random time instant $\nu$ ($\nu$ may be $\infty$ and in this case the change never happens), the environment is ``in control", and the interested change event happens at $\nu$, after which the system is ``out of control" and some measures must be taken as soon as possible. Specifically, the observations at the sensor before $\nu$, $X_1,X_2,\ldots,X_{\nu-1}$ denoted by $X_1^{\nu-1}$ are i.i.d. with measure $\mathbb{P}_{\infty}$, and $X_{\nu}^{\infty}$ are i.i.d. with measure $\mathbb{P}_{1}$.
Note that $\mathbb{P}_{\nu}$ denote the probability measure when the change happens at $\nu$. If there is no change, we denote this measure by $\mathbb{P}_{\infty}$. The expectation $\mathbb{E}_{\nu}$ and $\mathbb{E}_{\infty}$ are defined accordingly. We assume $\mathbb{P}_{\infty}$ and $\mathbb{P}_1$ are mutually locally absolutely continuous measures.

Define the filtration $\{\mathcal{F}_k\}_{k\in\mathbb{N}}$ induced by the observations $\{X_k\}_{k\in\mathbb{N}_{+}}$ as
\[\mathcal{F}_k\triangleq\sigma(X_1,X_2,\ldots,X_k), \forall k\in\mathbb{N}_{+}, \]
with the understanding that $\mathcal{F}_0$ is the trivial $\sigma$-algebra. Note that $\mathcal{F}=\vee_{k\geq 0} \mathcal{F}_k$. A random variable $T\in\mathbb{N}_{+}$ is called a stopping time if the event $\{T=k\}\in \mathcal{F}_k, \forall k\in \mathbb{N}_{+}$. The objective of classical quickest change detection is to find a stopping time $T$ that detects the change event \emph{as quickly as possible}, subject to the \emph{risk of false detection} \cite{polunchenko2012state}. For the different problem formulations (e.g., Bayesian and minimax), there are different criteria how to measure ``\emph{as quickly as possible}" and ``\emph{risk of false detection}".

In the minimax formulation of the quickest change detection, one aims to minimize the worst detection delay in a sense. In the minimax setting, the risk of false detection is measured by the average run length (ARL) to false alarm $\mathbb{E}_{\infty}[T]$ \cite{lorden1971procedures}.
As the detection delay is concerned, there are mainly two criteria: Lorden's ``worst-worst case" detection delay \cite{lorden1971procedures} and Pollak's ``worst case" conditional average delay \cite{pollak1985optimal}. Given a stopping time $T$, the associated Lorden's detection delay is defined by
\begin{align}
\mathcal{D}_L(T) \triangleq \sup_{0\leq\nu<\infty} \big\{ \esssup \mathbb{E}_{\nu}[(T-\nu)^+|\mathcal{F}_{\nu}] \big\},
\end{align}
where $(T-\nu)^+ = (T-\nu)\mathbf{1}_{\{(T-\nu)\geq0\}}$. Pollak's detection delay is defined by
\begin{align}
\mathcal{D}_P(T) \triangleq \sup_{0\leq\nu<\infty} \mathbb{E}_{\nu}[(T-\nu)^+|T>\nu].
\end{align}

Denote by $\mathcal{C}_{\gamma}$ the class of stopping times that has the ARL lower bounded by $\gamma \geq 1$, i.e.,
\[\mathcal{C}_{\gamma} \triangleq \{T:\mathbb{E}_{\infty}[T]\geq \gamma\}.\]
Lorden's minimax optimization problem is to find the optimal stopping time $T^*$ such that
\begin{align}
\mathcal{D}_L(T^*)=\inf_{T\in\mathcal{C}_{\gamma}}\mathcal{D}_L(T), \:\text{for every}\: \gamma \geq 1.
\end{align}
Similarly, the Pollak's minimax optimization problem aims to find the optimal stopping time  $T^*$ such that
\begin{align}
\mathcal{D}_P(T^*)=\inf_{T\in\mathcal{C}_{\gamma}}\mathcal{D}_P(T), \:\text{for every}\: \gamma \geq 1.
\end{align}
Lorden \cite{lorden1971procedures} proved that Page's CuSum detection procedure \cite{page1954continuous} is first-order asymptotically optimal as $\gamma \to \infty$. Later, Moustakides \cite{moustakides1986optimal} and Ritov \cite{ritov1990decision} proved that the CuSum procedure is in fact strictly optimal for any $\gamma>1$. The strict optimal detection procedure for the Pollak's problem is still an open problem. In the regime of asymptotic optimality, it has been proved that the CuSum algorithm is first-order asymptotically optimal \cite{lai1998information} and some variants of the Shiyaev-Roberts (SR) procedure (e.g., the SR-$r$ and SRP procedure) are third-order asymptotically optimal \cite{tartakovsky2012third}.

\subsection{Minimax Quickest Change Detection with Energy Constraint}
The classical minimax quickest change detection formulations, however, do not consider the cost of sending observations. This is, however, an important issue when a wireless sensor is used.  In this paper, we take the energy constraint into account and study the variation of the classical minimax quickest change detection.

We consider the censoring strategy at the sensor's side. After taking an observation at each time instant, the sensor needs to decide whether or not to send it to the decision maker. Assume that each communication between the sensor and the center will cost the sensor $1$ unit of energy. Let the binary-valued variable $\mu_k$ be the censoring rule at time $k$ and $\vec{\mu}=\{\mu_1,\ldots,\mu_k,\ldots\}$ be the decision policy.
Note that since it is assumed that there is no feedback or memory for the sensor, decision is made based only on current observations,
i.e.,
\[\theta_k=\mu_k(X_k).\]
We consider stationary policies of the form $\vec{\mu}=\{\mu,\ldots,\mu,\ldots\}$. The reasons why we consider stationary policy are as follows. First, the stationary policy can facilitate local processing and strategy implementation at the sensor nodes. Second, compared with time-varying policies, the stationary policy is easier to be analyzed, which can provide insights of the benefits of censoring strategies for the quickest change detection. Third, a reasonable time-varying policy is the one where the censoring strategy varies with the detection statistic. Then the observations available at the center are correlated. We leave such complicated case to the future work. As $\vec{\mu}$ is fully specified by $\mu$, in the sequel we will use these two notations interchangeably.

We pose the energy constraint by
\begin{align}
\mathbb{E}_{\infty}[\mu(X_1)] \leq \epsilon, \label{Eqn:energyconstraint}
\end{align}
where
$0 < \epsilon \leq 1$ is the upper bound of the average units of energy used at each time before the change happens.
We ignore the energy cost constraint after the change event as the cost of the detection system still working after the change is already penalized by the detection delay. The parameter $\epsilon$ can be tuned to achieve desired trade-off between energy cost and detection performance.
Note that since the censoring strategy is stationary and $X_1^{\infty}$ are i.i.d under $\mathbb{P}_{\infty}$, the above energy constraint can be rewritten as
\begin{align*}
\mathbb{E}_{\infty}[\mu(X_k)] \leq \epsilon, \, \forall k\geq 1.
\end{align*}
Such an energy constraint naturally arises in wireless sensor networks and similar ones have been studied in \cite{banerjee2012data}.
In particular, the above energy constraint is equivalent to the pre-change duty cycle (PDC) introduced in \cite{banerjee2012data}. Like PDC, the energy constraint defined in \eqref{Eqn:energyconstraint} can be rewritten by
\begin{align}
\limsup_{n}\frac{1}{n}\mathbb{E}_{\infty} \left[ \sum_{k=1}^{n-1} \theta_k  \right] \leq \epsilon.
\end{align}
To cope with this constraint, we adopt censoring strategy, whereas \cite{banerjee2012data} resorts to detection procedure and they propose a so called ``DE-CuSum" algorithm to detect the change event.

Let us illustrate with a simple example why the quickest change detection problem with energy constraint is interesting also from a practical viewpoint. Consider the structural health monitoring of a bridge \cite{rice2010flexible}. Wireless battery-powered sensor nodes are deployed to monitor the bridge. Based on the collected data, a person in charge aims to detect any abnormal condition (e.g., a small crack in the bridge) as soon as possible to take suitable actions. The false alarm rate should be as small as possible, as unnecessary actions are costly. Note that the reciprocal of the ARL is connected to the false alarm rate. The energy constraint \eqref{Eqn:energyconstraint} is natural as the battery-powered sensors might be difficult and costly to recharge. For the overall system design, the energy constraint can be viewed as a design parameter, which should be considered together with the likelihood of an abnormal event and the false alarm rate.

Define the filtration $\{\mathcal{F}^{\mu}_k\}_{k\in\mathbb{N}}$ induced by the observations $\{X_k\}_{k\in\mathbb{N}_{+}}$ and the decision rule $\mu$ as
\[\mathcal{F}^{\mu}_k \triangleq \sigma(\theta_1X_1,\ldots,\theta_kX_k,\theta_1,\ldots,\theta_k), \forall k\in\mathbb{N}_{+}, \]
with the understanding that $\mathcal{F}^{\mu}_0$ is the trivial $\sigma$-algebra.
Similar to the classical setup, we define the following quantities:
\begin{align}
\mathcal{D}^{\mu}_L(T) \triangleq & \sup_{0\leq\nu<\infty} \big\{ \esssup \mathbb{E}^{\mu}_{\nu}[(T-\nu)^+|\mathcal{F}^{\mu}_{\nu}] \big\},  \label{Eqn:DetectionDelayLorden}  \\
\mathcal{D}^{\mu}_P(T) \triangleq & \sup_{0\leq\nu<\infty} \mathbb{E}^{\mu}_{\nu}[(T-\nu)^+|T>\nu],\\
\mathcal{C}^{\mu}_{\gamma}\triangleq & \, \{T:\mathbb{E}^{\mu}_{\infty}[T]\geq \gamma\},
\end{align}
where $\mathbb{E}^{\mu}_{\nu}$ is the expectation when the change event happens at time $\nu$ and the decision policy $\mu$ is adopted; $\mathbb{E}^{\mu}_{\infty}$, $\mathbb{P}^{\mu}_{\nu}$ and $\mathbb{P}^{\mu}_{\infty}$ are defined similarly. Denoted by $\mathcal{U}_{\epsilon}$ the class of all the admissible decision policies:
\[\mathcal{U}_{\epsilon} \triangleq \, \{ \mu: \mathbb{E}_{\infty}[\mu(X_1)] \leq \epsilon \}. \]
 Then the two problems corresponding to the classical Lorden's and Pollak's problem we are interested in are formulated as follows:
\begin{problem} \label{Problem:Lorden}
\begin{align*}
\text{find} &  \quad \mu^{*}, T^{*} \\
\text{s.t.} &  \quad \mathcal{D}^{\mu^{*}}_L(T^*)=\inf_{T\in\mathcal{C}_{\gamma},\mu\in\mathcal{U}_{\epsilon}}\mathcal{D}^{\mu}_L(T), \:\text{for every}\: \gamma>1.
\end{align*}
\end{problem}
and
\begin{problem}  \label{Problem:Pollak}
\begin{align*}
\text{find} &  \quad \mu^{*}, T^{*} \\
\text{s.t.} &  \quad \mathcal{D}^{\mu^{*}}_P(T^*)=\inf_{T\in\mathcal{C}_{\gamma},\mu\in\mathcal{U}_{\epsilon}}\mathcal{D}^{\mu}_P(T), \:\text{for every}\: \gamma>1.
\end{align*}
\end{problem}

\section{Main Results} \label{Section:Main results}
In this section, we provide solutions to the above two problems. We first prove that the censoring strategy that has the maximal post-censoring K-L divergence coupled with the CuSum algorithm and SRP procedure is asymptotically optimal for Problem \ref{Problem:Lorden} and Problem \ref{Problem:Pollak}, respectively. In general, to compute such asymptotically optimal censoring strategy can only be done numerically. If we search for the asymptotically optimal censoring strategy in the whole admissible class $\mathcal{U}_{\epsilon}$, the computation load will be very huge, especially when the dimension of the observation is high. To alleviate this computation burden, we give two necessary conditions on the asymptotically optimal censoring strategy. One is that the asymptotically optimal censoring strategy should use up the available energy, i.e., equation \eqref{Eqn:energyconstraint} becomes equality for the asymptotically optimal censoring strategy. The other is that the asymptotically optimal censoring strategy has a special structure, that is, the likelihood ratio of the \emph{no send} region is a single interval.

\subsection{Asymptotically Optimal Censoring Strategy for Lorden's Problem}
Define a variation of likelihood ratio function as
\begin{align} \label{Eqn:likelihoodratio}
L^{\mu}(X_k,\theta_k) \triangleq \left\{
\begin{array}{ll}
(\mathrm{d}\mathbb{P}_1/\mathrm{d}\mathbb{P}_{\infty})(X_k), & \text{if }  \theta_k=1,\\
\frac{\mathbb{P}^{\mu}_1\{ \theta_k=0\} }{\mathbb{P}^{\mu}_{\infty}\{ \theta_k=0\}}, & \text{if } \theta_k=0,
\end{array} \right.
\end{align}
where $\mathrm{d}\mathbb{P}_1/\mathrm{d}\mathbb{P}_{\infty}$ is the Radon-Nikodym derivative. Note that since it is assumed that $\mathbb{P}_{\infty}$ and $\mathbb{P}_1$ are mutually locally absolutely continuous, such derivative always exists.
We introduce a statistic based on $L^{\mu}(X_k,\theta_k)$, which can be calculated recursively by
\begin{align}
S_k=&\max_{1\leq q \leq k} \, \prod_{i=q}^{k}L^{\mu}(X_i,\theta_i)\\
=&\max(S_{k-1},1)L^{\mu}(X_k,\theta_k), \forall k\in\mathbb{N}_{+},
\end{align}
with $S_0=0$.
The stopping time based on $S_k$ is given by
\begin{align}
T_{\gamma}^A=\inf\{k \geq 0 : S_k \geq A \},  \label{Eqn:StoppingtimeCUSUM}
\end{align}
where $A$ is a constant threshold such that
\begin{align}
\mathbb{E}^{\mu}_{\infty}[T_{\gamma}^A] = \gamma.
\end{align}
When defining stopping times, we adopt the convention that $\inf\{\emptyset\}=\infty$, i.e., $T_{\gamma}^A=\infty$ if the statistics $S_k$ never crosses $A$.

\begin{lemma} \label{Lemma:optimalstopping}
Given any censoring strategy $\mu \in \mathcal{U}_{\epsilon}$, the stopping time $T_{\gamma}^A$ defined in \eqref{Eqn:StoppingtimeCUSUM} is strictly optimal for Lorden's problem, i.e., for any $\mu$,
\begin{align}
\mathcal{D}^{\mu}_L(T_{\gamma}^A)=\inf_{T\in\mathcal{C}_{\gamma}}\mathcal{D}^{\mu}_L(T), \:\text{for every}\: \gamma>1.
\end{align}
\end{lemma}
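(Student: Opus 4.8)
The plan is to reduce the problem to a \emph{classical} Lorden change-detection problem for the sequence of \emph{censored observations} seen at the center, and then invoke the known strict optimality of the CuSum procedure for i.i.d. observations due to Moustakides \cite{moustakides1986optimal} and Ritov \cite{ritov1990decision}. The point is that, once the reduction is made, no new optimality argument is needed.

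First I would make precise the observation process available at the center. Since $\mu$ is fixed and stationary and the sensor has no feedback or memory, at each time $k$ the center either learns the value $X_k$ (when $\theta_k=\mu(X_k)=1$) or only learns that $X_k$ fell in the \emph{no send} region $\{x:\mu(x)=0\}$ (when $\theta_k=0$). I would encode this by a single censored observation $Y_k$ taking values in the disjoint union of the \emph{send} region $\{x:\mu(x)=1\}$ and an extra atom $\star$ representing \emph{no send}. By construction $Y_k$ is a measurable function of $(\theta_k X_k,\theta_k)$ and conversely, so that $\mathcal{F}^{\mu}_k=\sigma(Y_1,\ldots,Y_k)$; that is, $\{\mathcal{F}^{\mu}_k\}$ is exactly the filtration generated by the censored observations. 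Next I would verify that $\{Y_k\}$ satisfies the hypotheses of the classical theory: because $\{X_k\}$ is i.i.d. under each regime and $\mu$ is stationary, $\{Y_k\}$ is i.i.d. under $\mathbb{P}^{\mu}_{\infty}$ before the change and i.i.d. under $\mathbb{P}^{\mu}_1$ after it. Taking the dominating measure to be the original dominating measure $\lambda$ restricted to the send region together with a unit point mass at $\star$, the densities of $Y_k$ are $f_{\infty}$ (resp.\ $f_1$) on the send region and $\mathbb{P}^{\mu}_{\infty}\{\theta_k=0\}$ (resp.\ $\mathbb{P}^{\mu}_1\{\theta_k=0\}$) at $\star$, where $\mathrm{d}\mathbb{P}_1/\mathrm{d}\mathbb{P}_{\infty}=f_1/f_{\infty}$. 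Hence the likelihood ratio $\mathrm{d}\mathbb{P}^{\mu}_1/\mathrm{d}\mathbb{P}^{\mu}_{\infty}$ evaluated at $Y_k$ is precisely $L^{\mu}(X_k,\theta_k)$ of \eqref{Eqn:likelihoodratio}, and mutual local absolute continuity of $\mathbb{P}^{\mu}_1,\mathbb{P}^{\mu}_{\infty}$ follows from the same property of $\mathbb{P}_1,\mathbb{P}_{\infty}$ (the energy constraint $\epsilon\le 1$ keeps $\mathbb{P}^{\mu}_{\infty}\{\theta_k=0\}>0$). Consequently $S_k$ is exactly the CuSum statistic built from the likelihood ratios of the i.i.d.\ censored sequence, and $T^A_{\gamma}$ of \eqref{Eqn:StoppingtimeCUSUM} is the corresponding CuSum stopping time.

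It then remains to observe that, for fixed $\mu$, the objects in Problem \ref{Problem:Lorden} coincide with the classical ones for $\{Y_k\}$: $\mathcal{D}^{\mu}_L(T)$ is Lorden's worst-case detection delay and the ARL constraint $\mathbb{E}^{\mu}_{\infty}[T]\ge\gamma$ is the classical false-alarm constraint, both expressed through the induced measures $\mathbb{P}^{\mu}_{\nu}$ and the induced filtration $\{\mathcal{F}^{\mu}_k\}$; moreover every competing $T$ must be an $\{\mathcal{F}^{\mu}_k\}$-stopping time, i.e., it may depend only on the censored data, which is exactly the admissible class optimized over in \cite{moustakides1986optimal,ritov1990decision}. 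With this identification, the strict optimality of $T^A_{\gamma}$ within $\mathcal{C}_{\gamma}$ is a direct application of the strict optimality of CuSum for the i.i.d.\ Lorden problem, the threshold $A$ being chosen so that $\mathbb{E}^{\mu}_{\infty}[T^A_{\gamma}]=\gamma$.

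The genuinely nontrivial work lies in the reduction itself rather than in any fresh optimality argument. The main obstacle I anticipate is the careful bookkeeping at the \emph{no send} atom: constructing the censored observation space and its dominating measure precisely enough to confirm that $\{Y_k\}$ is i.i.d.\ with likelihood ratio $L^{\mu}$, that the point-mass contribution to $\mathrm{d}\mathbb{P}^{\mu}_1/\mathrm{d}\mathbb{P}^{\mu}_{\infty}$ is handled correctly (including the degenerate case $\mathbb{P}^{\mu}_1\{\theta_k=0\}=0$), that $\mathcal{F}^{\mu}_k=\sigma(Y_1,\ldots,Y_k)$, and that the admissible stopping times for the center are exactly the $\{\mathcal{F}^{\mu}_k\}$-stopping times. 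Once these facts are in place, Moustakides' theorem applies verbatim and the claim follows.
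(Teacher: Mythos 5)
Your proposal is correct and follows essentially the same route as the paper: the paper likewise encodes the \emph{no send} event as a special symbol $\wp$ (your atom $\star$), observes that the resulting sequence is i.i.d.\ under $\mathbb{P}^{\mu}_{\infty}$ and $\mathbb{P}^{\mu}_{1}$ with likelihood ratio exactly $L^{\mu}(X_k,\theta_k)$, and then invokes the strict optimality of CuSum from \cite{moustakides1986optimal,ritov1990decision} for the classical Lorden problem on the censored sequence. Your write-up is in fact more explicit than the paper's about the dominating measure and the identification $\mathcal{F}^{\mu}_k=\sigma(Y_1,\ldots,Y_k)$, but the underlying argument is identical.
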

\begin{proof}
We introduce the random variable
\begin{align}
Z_k = \left\{
\begin{array}{ll}
X_k, & \text{if }  \theta_k=1,\\
\wp, & \text{if } \theta_k=0,
\end{array} \right.
\end{align}
where $\wp$ is a particular symbol indicating the event when the center does not receive the data from the remote sensor. ``Receiving nothing" can be regarded as a special observation, since the censoring strategy is observations dependent. Instead of just discarding such special observations (assign likelihood ratio $1$ in the recursion update for Page's CuSum statistics), we assign the corresponding likelihood ratio value according to the policy used. Note that $\wp$ should be chosen such that $\mathbb{P}_1(\wp) = \mathbb{P}_{\infty}(\wp) = 0.$

Since the censoring strategy is stationary, it is easily seen that if the change event happens at $\nu$ and the particular policy $\mu$ is adopted, $Z_1^{\nu-1}$ are i.i.d. with measure $\mathbb{P}^{\mu}_{\infty}$ and $Z_{\nu}^{\infty}$ are i.i.d. with density function $\mathbb{P}^{\mu}_1$. Denote by $\{\overline{\mathcal{F}}_k\}_{k\in\mathbb{N}}$ the filtration induced by the observations $\{Z_k\}_{k\in\mathbb{N}_{+}}$, defined similar to $\{\mathcal{F}_k\}_{k\in\mathbb{N}}$. The detection delay associated with Lorden's problem defined in \eqref{Eqn:DetectionDelayLorden} can be rewritten as
\[  \mathcal{D}^{\mu}_L(T) \triangleq  \sup_{0\leq\nu<\infty} \big\{ \esssup \mathbb{E}^{\mu}_{\nu}[(T-\nu)^+|\overline{\mathcal{F}}_{\nu}] \big\}.  \]
Then given any admissible censoring strategy $\mu \in \mathcal{U}_{\epsilon}$, Problem \ref{Problem:Lorden} can be written as
\begin{align*}
\text{find} & \quad  T^{*} \\
s.t. &  \quad \mathcal{D}^{\mu}_L(T^*)=\inf_{T\in\mathcal{C}_{\gamma}}\mathcal{D}^{\mu}_L(T), \:\text{for every}\: \gamma>1.
\end{align*}
Note that $T$ is also a stopping time with respect to $\overline{\mathcal{F}}_{k}$, so the above problem is just the classical Lorden's formulation. The CuSum procedure has been proved to be strictly optimal for Lorden's formulation \cite{moustakides1986optimal, ritov1990decision}. It is easily verified that $L^{\mu}(X_k,\theta_k)$ defined in \eqref{Eqn:likelihoodratio} is in effect the likelihood ratio function of $Z_k$. Hence the stopping time defined in \eqref{Eqn:StoppingtimeCUSUM} is indeed the Page's CuSum procedure based on observations $Z_k$. The strict optimality of $T_{\gamma}^A$ thus follows, which concludes the proof.
\end{proof}

In the following, we study the optimal censoring strategy. To avoid degenerate problem, we assume the finiteness of K-L divergence of local observations in the sequel. Specifically, it is assumed that
\begin{align} \label{Eqn:finiteAssumption}
0<\mathbb{D}(\mathbb{P}_1||\mathbb{P}_{\infty}) \triangleq \mathbb{E}_{k}[\ell(X_k)] < \infty,
\end{align}
where $\ell(X_k)=\ln \frac{\mathrm{d}\mathbb{P}_1}{\mathrm{d}\mathbb{P}_{\infty}}(X_k)$ is the log-likelihood ratio function.

\begin{theorem} \label{Theorem:AsymOptCuSum}
Introduce the censoring strategy
\begin{align}
\mu^{*} \triangleq \argmax_{\mu\in\mathcal{U}_{\epsilon}} \mathbb{E}^{\mu}_{k}[ \ln L^{\mu}(X_k,\theta_k) ]. \label{Eqn:optimalCensoringDef}
\end{align}
Then for any $\epsilon \in (0,1]$, the pair of censoring strategy and stopping time $(\mu^{*},T_{\gamma}^A)$ with $T_{\gamma}^A$ defined in \eqref{Eqn:StoppingtimeCUSUM} is third-order asymptotically optimal for Problem \ref{Problem:Lorden}, i.e.,
\begin{align}  \label{Eqn:LordenAsymOpt}
\mathcal{D}^{\mu^{*}}_L(T_{\gamma}^A)=\inf_{T\in\mathcal{C}_{\gamma},\mu\in\mathcal{U}_{\epsilon}}\mathcal{D}^{\mu}_L(T)+ \smallO 1, \: \text{as}\, \gamma \to \infty.
\end{align}
\end{theorem}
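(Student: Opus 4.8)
The plan is to decouple the optimization over the stopping time from the optimization over the censoring policy, and then reduce the inner stopping problem to a classical i.i.d.\ change-detection problem for which sharp renewal-theoretic asymptotics are available. By Lemma \ref{Lemma:optimalstopping}, for every fixed $\mu\in\mathcal{U}_{\epsilon}$ the CuSum rule $T_{\gamma}^A$ is exactly optimal for Lorden's criterion, so the double infimum collapses to an infimum over policies only:
\[
\inf_{T\in\mathcal{C}_{\gamma},\,\mu\in\mathcal{U}_{\epsilon}}\mathcal{D}^{\mu}_L(T)=\inf_{\mu\in\mathcal{U}_{\epsilon}}\mathcal{D}^{\mu}_L(T_{\gamma}^A).
\]
Hence it suffices to show that $\mu^{*}$ attains this remaining infimum up to $\smallO{1}$. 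Recall that, through the $Z_k$ representation introduced in the proof of Lemma \ref{Lemma:optimalstopping}, each policy $\mu$ turns the censored problem into an ordinary i.i.d.\ change-detection problem with pre- and post-change laws $\mathbb{P}^{\mu}_{\infty},\mathbb{P}^{\mu}_1$ and post-censoring divergence $D^{\mu}\triangleq\mathbb{E}^{\mu}_{k}[\ln L^{\mu}(X_k,\theta_k)]=\mathbb{D}(\mathbb{P}^{\mu}_1\|\mathbb{P}^{\mu}_{\infty})$, which is exactly the functional maximized by $\mu^{*}$ in \eqref{Eqn:optimalCensoringDef}.

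Second, I would import the known third-order asymptotics for the CuSum delay in the reduced i.i.d.\ model (the results cited as \cite{tartakovsky2012third}, built on nonlinear renewal theory). Calibrating the threshold by $\mathbb{E}^{\mu}_{\infty}[T_{\gamma}^A]=\gamma$ and using the finiteness guaranteed by \eqref{Eqn:finiteAssumption}, these give an expansion of the form
\[
\mathcal{D}^{\mu}_L(T_{\gamma}^A)=\frac{\ln\gamma}{D^{\mu}}+\kappa(\mu)+\smallO{1},\qquad \gamma\to\infty,
\]
where $\kappa(\mu)$ is a bounded overshoot/renewal constant determined by the censored log-likelihood increments. Here one must first verify that the regularity hypotheses behind such an expansion (square-integrability and the non-arithmetic character of the censored log-likelihood ratio under both $\mathbb{P}^{\mu}_{\infty}$ and $\mathbb{P}^{\mu}_1$) are inherited from \eqref{Eqn:finiteAssumption} and the local mutual absolute continuity of $\mathbb{P}_{\infty},\mathbb{P}_1$.

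Third, I would let the leading term do the work. Since $\ln\gamma\to\infty$ and $D^{\mu}\le D^{\mu^{*}}$ for every admissible $\mu$ by the very definition of $\mu^{*}$, any policy whose divergence stays bounded away from $D^{\mu^{*}}$ incurs an excess delay of order $\ln\gamma\,(D^{\mu^{*}}-D^{\mu})/(D^{\mu}D^{\mu^{*}})\to\infty$, which the bounded constant $\kappa(\cdot)$ cannot offset; such policies are asymptotically disqualified from the infimum. Consequently the infimum is governed by policies with $D^{\mu}$ arbitrarily close to $D^{\mu^{*}}$, and evaluating the expansion at $\mu^{*}$ yields
\[
\mathcal{D}^{\mu^{*}}_L(T_{\gamma}^A)=\frac{\ln\gamma}{D^{\mu^{*}}}+\kappa(\mu^{*})+\smallO{1}=\inf_{\mu\in\mathcal{U}_{\epsilon}}\mathcal{D}^{\mu}_L(T_{\gamma}^A)+\smallO{1},
\]
which is \eqref{Eqn:LordenAsymOpt}.

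The main obstacle is precisely this last matching of the $\bigO{1}$ (second-order) term uniformly over the infinite-dimensional class $\mathcal{U}_{\epsilon}$. First-order optimality is immediate from maximizing $D^{\mu}$, but the claimed $\smallO{1}$ gap requires ruling out a minimizing sequence $\mu_{\gamma}$ whose divergence creeps up to $D^{\mu^{*}}$ fast enough (i.e.\ $D^{\mu^{*}}-D^{\mu_{\gamma}}=\smallO{1/\ln\gamma}$) while its constant $\kappa(\mu_{\gamma})$ stays strictly below $\kappa(\mu^{*})$. Controlling this needs either uniqueness of the maximizer $\mu^{*}$ together with lower semicontinuity of $\kappa(\cdot)$ as a functional of the policy near $\mu^{*}$, or a uniform-in-$\mu$ version of the renewal expansion so that the $\smallO{1}$ remainder is genuinely policy-independent. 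Establishing this continuity/uniformity of the overshoot correction is the delicate step; the reduction in the first two paragraphs and the divergence-maximization argument in the third are comparatively routine once the cited asymptotics are in hand.
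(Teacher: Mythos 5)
Your skeleton is the same as the paper's: Lemma \ref{Lemma:optimalstopping} collapses the double infimum to $\inf_{\mu\in\mathcal{U}_{\epsilon}}\mathcal{D}^{\mu}_L(T_{\gamma}^{A_{\mu}})$, and policies are then compared through CuSum delay asymptotics expressed in terms of the post-censoring divergence $\mathbb{D}(\mathbb{P}^{\mu}_1\|\mathbb{P}^{\mu}_{\infty})$. The difference is in which asymptotics get imported. The paper does not use any additive expansion with an overshoot constant $\kappa(\mu)$: it bounds $\mathbb{D}(\mathbb{P}^{\mu}_1\|\mathbb{P}^{\mu}_{\infty})\le\mathbb{D}(\mathbb{P}_1\|\mathbb{P}_{\infty})<\infty$ by the invariance of K-L divergence, invokes only Lorden's first-order result (Theorem 3 of \cite{lorden1971procedures}), namely $\mathcal{D}^{\mu}_L(T_{\gamma}^{A_{\mu}})=\frac{\ln\gamma}{\mathbb{D}(\mathbb{P}^{\mu}_1\|\mathbb{P}^{\mu}_{\infty})}(1+\smallO 1)$, and then shows that for any two policies with ordered divergences the ratio of their delays converges to a limit $\le 1$ (with the convention $0/0=1$ for degenerate policies). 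This choice also matters for the hypotheses: the sharper expansion you invoke requires second moments and non-arithmetic censored log-likelihood increments, which the paper assumes only later, in the SRP subsection; Theorem \ref{Theorem:AsymOptCuSum} itself grants only \eqref{Eqn:finiteAssumption}, which is exactly what Lorden's first-order theorem needs. So your second step, as written, uses regularity that the theorem's hypotheses do not provide.

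That said, the obstacle you name in your final paragraph is genuine, and you should know that the paper does not overcome it -- it sidesteps it. A pairwise ratio tending to a limit $\le 1$ yields first-order optimality; the passage to the additive $\smallO 1$ claim in \eqref{Eqn:LordenAsymOpt}, uniformly over the infinite-dimensional class $\mathcal{U}_{\epsilon}$, is precisely the step the paper asserts ``follows directly'' from its ratio bound, with no further argument. In particular, the scenario you describe -- a minimizing sequence $\mu_{\gamma}$ whose divergence is within $\smallO{1/\ln\gamma}$ of the maximum while its second-order constant stays strictly smaller -- is ruled out neither by your argument nor by the paper's. So your proposal reproduces the substance of the paper's proof (with stronger imported machinery than the paper actually uses) and is more candid about where it is thin; genuinely closing that gap would require what you describe, a policy-uniform remainder in the expansion or continuity of the overshoot constants near $\mu^{*}$, which neither you nor the paper establishes.
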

\begin{proof}
The result of Lemma \ref{Lemma:optimalstopping} tells that to find the optimal (regardless of strictly optimal or asymptotically optimal) censoring strategy, instead of searching from all the possible pairs of censoring strategy and stopping time $(\mu,T)$, we can just compare the performance of $(\mu,T_{\gamma}^{A_{\mu}})$. Here we use the notation $A_{\mu}$ instead of $A$ to highlight that the threshold parameter for CuSum depends on the specific censoring strategy being used.

From the invariance properties of K-L divergence \cite[p. 19]{kullback1978information}, no censoring strategy can increase discrimination information, i.e., the post-censoring K-L divergence cannot be larger than the K-L divergence of the local observation:
\begin{align*}
\mathbb{D}(\mathbb{P}^{\mu}_1||\mathbb{P}^{\mu}_{\infty}) \triangleq &\: \mathbb{E}^{\mu}_{k}[ \ln L^{\mu}(X_k,\theta_k) ] \\
\leq & \mathbb{D}(\mathbb{P}_1||\mathbb{P}_{\infty}) < \:\infty.
\end{align*}
From Theorem 3 of \cite{lorden1971procedures}, we can get
\begin{align}
\mathcal{D}^{\mu}_L(T_{\gamma}^{A_{\mu}})=\frac{\ln \gamma}{\mathbb{D}(\mathbb{P}^{\mu}_1||\mathbb{P}^{\mu}_{\infty}) }(1+\smallO1), \text{as}\: \gamma \to \infty.
\end{align}
Let $\mu_1,\mu_2$ be two arbitrary censoring strategies such that
\begin{align}
\mathbb{D}(\mathbb{P}^{\mu_1}_1||\mathbb{P}^{\mu_1}_{\infty}) \geq \mathbb{D}(\mathbb{P}^{\mu_2}||\mathbb{P}^{\mu_2}_{\infty}).
\end{align}
Then as $\gamma \to \infty$,
\begin{align} \label{Eqn:dividelessone}
\frac{\mathcal{D}^{\mu_1}_L(T_{\gamma}^{A_{\mu_1}})}{\mathcal{D}^{\mu_2}_L(T_{\gamma}^{A_{\mu_2}})}=
\frac{\mathbb{D}(\mathbb{P}^{\mu_2}||\mathbb{P}^{\mu_2}_{\infty})}{\mathbb{D}(\mathbb{P}^{\mu_1}_1||\mathbb{P}^{\mu_1}_{\infty})}
\leq  1,
\end{align}
where we define $\frac{0}{0}=1$. If $\mathbb{D}(\mathbb{P}^{\mu}_1||\mathbb{P}^{\mu}_{\infty})=0$, the detection problem is degenerate and the detection delay $\mathcal{D}^{\mu}_L(T_{\gamma}^{A_{\mu}})=\infty$. Hence we treat the class that has zero post-censor discrimination information equally.

The equation \eqref{Eqn:LordenAsymOpt} follows directly from \eqref{Eqn:dividelessone} and the proof thus is complete.
\end{proof}

\subsection{Asymptotically Optimal Censoring Strategy for Pollak's Problem}
Even for the classical Pollak's problem, the strict optimal detection procedure is still an open issue. Thus instead of looking for the optimal (or asymptotically optimal) pair of censoring strategy and detection procedure, we aim to find optimal (or asymptotically optimal) censoring strategy given a specific detection procedure. The SR procedure and its variants: SRP and SR-$r$ have been proved to be asymptotically optimal for the classical Pollak's problem \cite{pollak1985optimal, tartakovsky2012third}. In this paper, we study the asymptotically optimal censoring strategy when the SRP procedure is being used. Note that the other two cases (when the detection procedure is SR or SR-$r$) can be studied similarly.

The SRP procedure for our problem is given by the stopping time
\begin{align}
T_{\gamma}^A=\inf\{k \geq 0 : R_k \geq A \},  \label{Eqn:StoppingtimeSRP}
\end{align}
where $A$ is a constant threshold such that
\begin{align}
\mathbb{E}^{\mu}_{\infty}[T_{\gamma}^A] = \gamma,
\end{align}
and
\begin{align}
R_k=(1+R_{k-1})L^{\mu}(X_k,\theta_k), \: k \geq 1,
\end{align}
with a random initial point $R_0\sim Q_A(x)$, where the quasi-stationary cdf $Q_A(x)$ is defined by
\begin{align} \label{Eqn:QuasiStaDistri}
Q_A(x) \triangleq  \lim_{k\to \infty} \mathbb{P}^{\mu}_{\infty}(R_k \leq x | T_{\gamma}^A > k).
\end{align}
From the classical quickest change detection theory, we know that given any censoring strategy $\mu$, the corresponding SRP is third-order asymptotically optimal, i.e.,
\begin{align}
\mathcal{D}^{\mu}_P(T_{\gamma}^A)=\inf_{T\in\mathcal{C}_{\gamma}}\mathcal{D}^{\mu}_P(T)+\smallO 1, \:\text{as}\: \gamma \to \infty.
\end{align}
The problem we are interested in is that given the stopping time $T_{\gamma}^A$ defined in \eqref{Eqn:StoppingtimeSRP}, what is the asymptotically optimal censoring strategy? The result is stated in the following theorem. To study the properties of SRP procedure, in this subsection we assume that $\mathbb{E}_{k}[{\ell(X_k)}^2] < \infty$, and there is no point mass for $\ell(X_k)$ under either $\mathbb{P}_{1}$ or $\mathbb{P}_{\infty}$, i.e.,
\begin{align}
\mathbb{P}_{1}(\ell(X_k)=t)=0, \, & \forall t\geq 0,\\
\mathbb{P}_{\infty}(\ell(X_k)=t)=0, \, & \forall t\geq 0.
\end{align}

\begin{lemma} \label{Lemma:finitesecondlikelistrategy}
$\mathbb{E}_{k}[{\ell(X_k)}^2] < \infty$ implies \[\mathbb{E}^{\mu}_{k}[ \ln^2 L^{\mu}(X_k,\theta_k)]< \infty, \forall \mu \in \mathcal{U}_{\epsilon}. \]
\end{lemma}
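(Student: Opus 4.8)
The plan is to exploit the fact that $\ln L^{\mu}(X_k,\theta_k)$ has an essentially two-valued structure: by the definition in \eqref{Eqn:likelihoodratio} it equals the local log-likelihood ratio $\ell(X_k)$ exactly on the \emph{send} region, while it collapses to a single constant on the \emph{no send} region. Writing $\Gamma_1=\{x:\mu(x)=1\}$ and $\Gamma_0=\{x:\mu(x)=0\}$, this reads
\[
\ln L^{\mu}(X_k,\theta_k)=\ell(X_k)\,\mathbf{1}_{\Gamma_1}(X_k)+c_{\mu}\,\mathbf{1}_{\Gamma_0}(X_k),\qquad c_{\mu}:=\ln\frac{\mathbb{P}^{\mu}_1\{\theta_k=0\}}{\mathbb{P}^{\mu}_{\infty}\{\theta_k=0\}}.
\]
Since $\Gamma_1$ and $\Gamma_0$ are disjoint, squaring annihilates the cross term, and because $\ln L^{\mu}(X_k,\theta_k)$ is a deterministic function of $X_k$, taking the post-change expectation (over $X_k\sim\mathbb{P}_1$) gives
\[
\mathbb{E}^{\mu}_k[\ln^2 L^{\mu}(X_k,\theta_k)]=\mathbb{E}_k\!\left[\ell(X_k)^2\,\mathbf{1}_{\Gamma_1}(X_k)\right]+c_{\mu}^2\,\mathbb{P}^{\mu}_1\{\theta_k=0\}.
\]
First I would bound the two pieces separately.

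For the first term, since the integrand is nonnegative I would simply discard the indicator and bound it by the unrestricted moment: $\mathbb{E}_k[\ell(X_k)^2\,\mathbf{1}_{\Gamma_1}]\leq\mathbb{E}_k[\ell(X_k)^2]<\infty$, which is precisely the standing hypothesis. Note that this step uses no structural property of $\mu$ at all, so it holds uniformly over $\mathcal{U}_{\epsilon}$.

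The second term is a single finite constant for each fixed $\mu$, and the only thing requiring care is that it not degenerate into an $\infty\cdot 0$ form; I expect this boundary bookkeeping to be the only real point of the argument. If both $\mathbb{P}^{\mu}_1\{\theta_k=0\}$ and $\mathbb{P}^{\mu}_{\infty}\{\theta_k=0\}$ are strictly positive then $c_{\mu}$ is finite and the term is trivially finite. The remaining cases are settled by the mutual local absolute continuity of $\mathbb{P}_1$ and $\mathbb{P}_{\infty}$: if $\mathbb{P}^{\mu}_{\infty}\{\theta_k=0\}=0$ then $\mathbb{P}^{\mu}_1\{\theta_k=0\}=0$ as well, so the no-send symbol has zero post-change probability and contributes nothing; and if $\mathbb{P}^{\mu}_1\{\theta_k=0\}=0$ while $\mathbb{P}^{\mu}_{\infty}\{\theta_k=0\}>0$, the term is $c_{\mu}^2\cdot 0$, which vanishes (equivalently, via $a(\ln a)^2\to 0$ as $a\downarrow 0$). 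In every case the second term is finite, so the sum is finite for every $\mu\in\mathcal{U}_{\epsilon}$, which completes the proof.
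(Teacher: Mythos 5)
Your proof is correct and follows essentially the same route as the paper's: decompose the second moment over the send and no-send regions, bound the send-region contribution by the unrestricted moment $\mathbb{E}_{k}[\ell(X_k)^2]$, and use the mutual (local) absolute continuity of $\mathbb{P}_1$ and $\mathbb{P}_{\infty}$ to guarantee that the constant attached to the no-send region is harmless. Your explicit bookkeeping of the degenerate cases (zero-probability no-send region) is slightly more careful than the paper's two-case split on $\mathbb{E}_{\infty}[\mu(X_k)]=1$ versus $\mathbb{E}_{\infty}[\mu(X_k)]<1$, but the substance is identical.
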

\begin{proof}
If $\mathbb{E}_{\infty}[\mu(X_k)] =1$, then
\[\mathbb{E}^{\mu}_{k}[ \ln^2 L^{\mu}(X_k,\theta_k)]=\mathbb{E}_{k}[{\ell(X_k)}^2]< \infty.\]
If $\mathbb{E}_{\infty}[\mu(X_k)] <1$, let $\Omega_c$ be the censoring region associated with $\mu$.
Since $\mathbb{P}_{\infty}$ and $\mathbb{P}_1$ are mutually locally absolutely continuous measures, then for any $\Omega_c$,
\begin{align*}
0<\frac{\int_{\Omega_{c}}\mathrm{d}\mathbb{P}_1}{\int_{\Omega_{c}}\mathrm{d}\mathbb{P}_{\infty}}<\infty.
\end{align*}
We then obtain
\begin{align*}
&\mathbb{E}^{\mu}_{k}[ \ln^2 L^{\mu}(X_k,\theta_k)] \\
=& \int_{\Omega\backslash \Omega_{c}}\ln^2\frac{\mathrm{d}\mathbb{P}_1}{\mathrm{d}\mathbb{P}_{\infty}} \mathrm{d}\mathbb{P}_1 +
\int_{\Omega_{c}}\ln^2\frac{\int_{\Omega_{c}}\mathrm{d}\mathbb{P}_1}{\int_{\Omega_{c}}\mathrm{d}\mathbb{P}_{\infty}} \mathrm{d}\mathbb{P}_1\\
\leq & \mathbb{E}_{k}[{\ell(X_k)}^2] +\ln^2\frac{\int_{\Omega_{c}}\mathrm{d}\mathbb{P}_1}{\int_{\Omega_{c}}\mathrm{d}\mathbb{P}_{\infty}} \\
< & \infty,
\end{align*}
and the proof is complete.
\end{proof}

\begin{theorem} \label{Theorem:AsymOptSRP}
Let $\mu^{*}$ be the censoring strategy defined in \eqref{Eqn:optimalCensoringDef}. Then given the stopping time $T_{\gamma}^A$ defined in \eqref{Eqn:StoppingtimeSRP}, for any $\epsilon\in(0,1]$, $\mu^{*}$ is third-order asymptotically optimal. Specifically,
\begin{align} \label{Eqn:SRPgreaterKLopt}
\mathcal{D}^{\mu^{*}}_P(T_{\gamma}^A)=\inf_{\mu\in\mathcal{U}_{\epsilon}}\mathcal{D}^{\mu}_P(T_{\gamma}^A)+\smallO1, \:\text{as}\: \gamma\to\infty.
\end{align}
\end{theorem}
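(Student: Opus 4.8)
The plan is to reduce the problem, for each fixed censoring strategy $\mu$, to a classical Pollak problem, and then to optimize over $\mu$ through the leading term of the third-order expansion of the SRP delay. First I would reuse the reduction of Lemma~\ref{Lemma:optimalstopping}: introducing the augmented observations $Z_k$ (equal to $X_k$ when $\theta_k=1$ and to the symbol $\wp$ when $\theta_k=0$), the sequence $\{Z_k\}$ is i.i.d.\ with measure $\mathbb{P}^{\mu}_{\infty}$ before the change and $\mathbb{P}^{\mu}_{1}$ after it, and $L^{\mu}(X_k,\theta_k)$ is exactly its likelihood ratio. Hence, for a fixed $\mu$, the statistic $R_k$ and the stopping time $T_{\gamma}^A$ in \eqref{Eqn:StoppingtimeSRP} are precisely the classical SRP procedure driven by $\{Z_k\}$, whose third-order asymptotic optimality among all $T\in\mathcal{C}_{\gamma}$ is the statement already recalled just above the theorem.

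Next I would verify that the regularity hypotheses of the classical third-order theory hold for $\{Z_k\}$. Lemma~\ref{Lemma:finitesecondlikelistrategy} supplies the finite second moment $\mathbb{E}^{\mu}_{k}[\ln^2 L^{\mu}(X_k,\theta_k)]<\infty$ for every $\mu\in\mathcal{U}_{\epsilon}$, while the no-point-mass assumptions on $\ell(X_k)$ guarantee that the post-censoring log-likelihood ratio is non-arithmetic: censoring can introduce at most the single atom carried by $\wp$ at the value $\ln\big(\mathbb{P}^{\mu}_1\{\theta_k=0\}/\mathbb{P}^{\mu}_{\infty}\{\theta_k=0\}\big)$, while the atomless continuous part on the \emph{send} region survives, so no lattice structure is created. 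These are exactly the conditions under which the SRP delay admits the expansion
\begin{align*}
\mathcal{D}^{\mu}_P(T_{\gamma}^A)=\frac{\ln\gamma}{\mathbb{D}(\mathbb{P}^{\mu}_1||\mathbb{P}^{\mu}_{\infty})}+c_{\mu}+\smallO{1}, \quad \text{as } \gamma\to\infty,
\end{align*}
where $c_{\mu}$ is a constant collecting the renewal-theoretic quantities (limiting average overshoot and the Shiryaev--Roberts normalizing constant) determined by the law of $L^{\mu}(X_k,\theta_k)$, cf.\ \cite{tartakovsky2012third}.

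I would then optimize over $\mu$ exactly as in the proof of Theorem~\ref{Theorem:AsymOptCuSum}. Writing $I_{\mu}\triangleq\mathbb{D}(\mathbb{P}^{\mu}_1||\mathbb{P}^{\mu}_{\infty})$ and $I^{*}\triangleq I_{\mu^{*}}$, the invariance of the K--L divergence under the many-to-one censoring map gives $I_{\mu}\le\mathbb{D}(\mathbb{P}_1||\mathbb{P}_{\infty})$ for all $\mu$, and by definition \eqref{Eqn:optimalCensoringDef} $\mu^{*}$ attains the largest value $I^{*}$. Since the dominant term $\ln\gamma/I_{\mu}$ is a strictly decreasing function of $I_{\mu}$, any competitor with $I_{\mu}<I^{*}$ satisfies $\mathcal{D}^{\mu}_P(T_{\gamma}^A)-\mathcal{D}^{\mu^{*}}_P(T_{\gamma}^A)\sim\ln\gamma\,(1/I_{\mu}-1/I^{*})\to\infty$, so no such strategy can beat $\mu^{*}$; assembling this with the per-$\mu$ third-order optimality of SRP yields \eqref{Eqn:SRPgreaterKLopt}.

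The delicate step, and the one I expect to be the main obstacle, is upgrading this comparison from $\bigO{1}$ to the claimed $\smallO{1}$ accuracy of the infimum. Because the infimum in \eqref{Eqn:SRPgreaterKLopt} ranges over the whole continuum $\mathcal{U}_{\epsilon}$, I must rule out a family of competitors with $I_{\mu}\uparrow I^{*}$ that could undercut $\mu^{*}$ through their lower-order constants $c_{\mu}$. I would resolve this in one of two ways: either by establishing that $\mu^{*}$ is the \emph{unique} K--L maximizer (a property the structural results of Section~\ref{Section:Main results} make plausible), so that every admissible competitor has $I_{\mu}$ bounded away from $I^{*}$ and is beaten by a margin of order $\ln\gamma$, whence the infimum is attained at $\mu^{*}$ for all large $\gamma$ and the $\smallO{1}$ term is in fact zero; or by using the uniform second-moment bound of Lemma~\ref{Lemma:finitesecondlikelistrategy} to show that the renewal constants entering $c_{\mu}$ stay bounded and vary continuously in $\mu$, so they cannot open an $\bigO{1}$ advantage in the limit $I_{\mu}\to I^{*}$.
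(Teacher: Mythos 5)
Your proposal follows the paper's proof essentially step for step: fix $\mu$, verify the regularity conditions (Lemma~\ref{Lemma:finitesecondlikelistrategy} for the finite second moment of the post-censoring log-likelihood ratio, absence of point mass for the non-arithmetic property), invoke the third-order expansions of the SRP delay and ARL from \cite{tartakovsky2012third}, choose the threshold $A=\gamma\zeta$ so that the ARL is $\gamma(1+\smallO{1})$ and the renewal constants drop out of the leading term, and then rank strategies through $\ln\gamma/\mathbb{D}(\mathbb{P}^{\mu}_1||\mathbb{P}^{\mu}_{\infty})$. The one place you depart from the paper is your final paragraph: the paper has no counterpart of the ``delicate step'' you describe. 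Its proof ends precisely with the pairwise ratio comparison \eqref{Eqn:dividelessone1} between two fixed strategies and concludes \eqref{Eqn:SRPgreaterKLopt} from it, with no uniformity argument over the continuum $\mathcal{U}_{\epsilon}$ and no treatment of competitor families with $I_{\mu}\uparrow I^{*}$ whose constants $c_{\mu}$ might drift; the obstacle you flag is therefore a genuine looseness in the published argument rather than something you must resolve to match it. One caution on your two proposed repairs: the first is unsound as stated, because uniqueness of the K-L maximizer does not make $I_{\mu}$ bounded away from $I^{*}$ over a continuum of admissible strategies (arbitrarily small perturbations of $\mu^{*}$ give $I_{\mu}$ arbitrarily close to $I^{*}$), so the infimum need not be attained at $\mu^{*}$ at any finite $\gamma$; only your second route --- uniform boundedness and continuity in $\mu$ of the renewal-theoretic constants --- could genuinely deliver the additive $\smallO{1}$ claim.
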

\begin{proof}
Given any censoring strategy $\mu$, since
\[\mathbb{E}_{k}[{\ell(X_k)}^2] < \infty,\]
by Lemma \ref{Lemma:finitesecondlikelistrategy}, one obtains
\[\mathbb{E}^{\mu}_{k}[ \ln^2 L^{\mu}(X_k,\theta_k)]< \infty.\]
Also as there is no point mass for $\ell(X_k)$, $\ln L^{\mu}(X_k,\theta_k)$ is non-arithmetic \footnote{A random variable $Y \in \mathbb{R}$ is said to be arithmetic if there exists constant $d>0$ such that \[\text{Pr}\{Y\in\{\ldots,-2d,-d,0,d,2d,\ldots\}\}=1.\] Otherwise it is called non-arithmetic. } for any censoring strategy. Then from \cite{tartakovsky2012third}, one obtains that
\begin{align}
\mathcal{D}^{\mu}_P(T_{\gamma}^A)=&\frac{1}{\mathbb{D}(\mathbb{P}^{\mu}_1||\mathbb{P}^{\mu}_{\infty})}(\ln A + \aleph -C_{\infty}) + \smallO 1\: \text{as}\: A \to \infty,  \label{Eqn:SRPasymperDelay} \\
\mathbb{E}_{\infty}^{\mu}(T_{\gamma}^A)=&\frac{A}{\zeta}(1+\smallO 1) \: \text{as}\: A \to \infty, \label{Eqn:SRPasymperARL}
\end{align}
where $\aleph$, $C_{\infty}$ and $\zeta$ are given as follows. Let $S_n=\ln L^{\mu}(X_1,\theta_1)+ \cdots + \ln L^{\mu}(X_n,\theta_n)$ and define the one-sided stopping time by $\tau_a=\arg\min\{n\geq 1: S_n \geq a\},$ for $a\geq 0$. Let $\kappa_a=S_{\tau_a}-a$ be the excess over the threshold $a$ at the stopping time, then $\aleph$ and $\zeta$ in \eqref{Eqn:SRPasymperDelay} and \eqref{Eqn:SRPasymperARL} is defined by
\begin{align*}
\aleph=\lim_{a\to\infty}\mathbb{E}_{1}^{\mu}[\kappa_a],
\qquad \qquad \zeta=\lim_{a\to\infty}\mathbb{E}_{1}^{\mu}[e^{-\kappa_a}].
\end{align*}
The variable $C_{\infty}$ is given by
\begin{align*}
C_{\infty}=\mathbb{E}_{\infty}^{\mu}[\ln(1+R_{\infty}+V_{\infty})],
\end{align*}
where $V_{\infty}=\sum_{i=1}^{\infty}e^{-S_i}$, and $R_{\infty}$ is a random variable that has the $\mathbb{P}_{\infty}^{\mu}$-limiting distribution of $R_n$ as $n\to\infty$, i.e.,
\[\lim_{n\to\infty}\mathbb{P}_{\infty}^{\mu}(R_n\leq x)=\mathbb{P}_{\infty}^{\mu}(R_{\infty}\leq x).\]
Obviously $0<\zeta<1$, and from the renewal theory (e.g., \cite{siegmund1985sequential}), one can obtain that $0<\aleph<\infty$ and $0<C_{\infty}<\infty$ whatever the censoring strategy is. Let $A=\gamma \zeta$, and since $\aleph$, $\zeta$ and $C_{\infty}$ only depend on the censoring strategy and observation model, which is independent of $\gamma$, we can rewrite \eqref{Eqn:SRPasymperDelay} and \eqref{Eqn:SRPasymperARL} as
\begin{align}
\mathcal{D}^{\mu}_P(T_{\gamma}^A)=&\frac{\ln \gamma}{\mathbb{D}(\mathbb{P}^{\mu}_1||\mathbb{P}^{\mu}_{\infty})}(1 + \smallO 1) \: \text{as}\: \gamma \to \infty, \\
\mathbb{E}_{\infty}^{\mu}(T_{\gamma}^A)=&\gamma(1+\smallO 1) \: \text{as}\: \gamma \to \infty,
\end{align}
Let $\mu_1,\mu_2$ be two arbitrary censoring strategies such that
\begin{align}
\mathbb{D}(\mathbb{P}^{\mu_1}_1||\mathbb{P}^{\mu_1}_{\infty}) \geq \mathbb{D}(\mathbb{P}^{\mu_2}||\mathbb{P}^{\mu_2}_{\infty}).
\end{align}
Then as $\gamma \to \infty$,
\begin{align} \label{Eqn:dividelessone1}
\frac{\mathcal{D}^{\mu_1}_P(T_{\gamma}^{A_{\mu_1}})}{\mathcal{D}^{\mu_2}_P(T_{\gamma}^{A_{\mu_2}})}=&
\frac{\mathbb{D}(\mathbb{P}^{\mu_2}||\mathbb{P}^{\mu_2}_{\infty})}{\mathbb{D}(\mathbb{P}^{\mu_1}_1||\mathbb{P}^{\mu_1}_{\infty})}
\leq  1, \\
\frac{\mathbb{E}_{\infty}^{\mu_1}(T_{\gamma}^A)}{\mathbb{E}_{\infty}^{\mu_2}(T_{\gamma}^A)}=&1.
\end{align}
Equation \eqref{Eqn:SRPgreaterKLopt} follows, so the proof is complete.
\end{proof}

\subsection{Maximize K-L Divergence} \label{Section:maximizeKLDivergence}
The above theorems show that the asymptotically optimal censoring strategy for both Lorden's and Pollak's problem is the one that has the maximal post-censoring K-L divergence.
In general, to find the optimal censoring strategy that maximizes the post-censoring K-L divergence has to be done numerically. In this subsection, we give two properties that the optimal strategy possesses. The usefulness of our results is that these two properties can be utilized to significantly reduce the computation load. Before proceeding, we introduce the following assumption.
\begin{assumption} \label{Assumption:nomasspoint}
There is no point mass for observation $X_k$ under either $\mathbb{P}_{\infty}$ or $\mathbb{P}_{1}$, i.e., both $\mathbb{P}_{\infty}$ and $\mathbb{P}_{1}$ are continuous over $\mathcal{F}$.
\end{assumption}
\begin{remark}
Under this assumption, our argument and presentation will be simplified. For scenarios where the observations have point mass, randomized censoring strategy can be used instead. For a randomized censoring strategy, whether a observation taken at the sensor is sent to the fusion center or not depends on not only the observation itself but also on another random variable (which needs to be carefully defined to meet certain constraints). Randomization over these possible point masses can ``split" them arbitrarily, so the problem will be reduced to one with no point mass. The following theorems thus can be extended to the point mass case easily.
\end{remark}
\begin{lemma} \label{Lemma:greaterratio}
Suppose $\mathbf{P}$ and $\mathbf{Q}$ are two probability measures on a measurable space $(\Omega,\mathcal{F})$.
Given any $A_1 \in \mathcal{F}$, define the following two quantities:
\begin{align}
|A_1|_{\mathbf{Q}\mathbf{P}} \triangleq & \frac{\int_{A_1} \mathrm{d}\mathbf{Q}}{\int_{A_1} \mathrm{d}\mathbf{P}},\\
|A_1|_{\mathbf{P}} \triangleq & \int_{A_1} \mathrm{d}\mathbf{P}.
\end{align}
Then $\forall~ b \in[0,|A_1|_{\mathbf{P}}]$, we can always find $A_2 \subseteq A_1$ such that
\begin{align}
|A_2|_{\mathbf{Q}\mathbf{P}} \geq & |A_1|_{\mathbf{Q}\mathbf{P}},\\
|A_2|_{\mathbf{P}}= & b,
\end{align}
if \begin{enumerate}
     \item $\mathbf{Q}$ is absolutely continuous with respect to $\mathbf{P}$.
     \item $\mathbf{P}$ is continuous.
   \end{enumerate}
\end{lemma}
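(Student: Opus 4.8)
The plan is to reformulate both quantities in terms of the Radon--Nikodym derivative and then to build $A_2$ as essentially the super-level set of this derivative restricted to $A_1$. By hypothesis (1), $\mathbf{Q}\ll\mathbf{P}$, so the derivative $f=\mathrm{d}\mathbf{Q}/\mathrm{d}\mathbf{P}$ exists, and for any measurable $E\subseteq A_1$ with $\mathbf{P}(E)>0$ we have
\[
|E|_{\mathbf{Q}\mathbf{P}}=\frac{\int_E f\,\mathrm{d}\mathbf{P}}{\int_E\mathrm{d}\mathbf{P}},
\]
i.e. $|E|_{\mathbf{Q}\mathbf{P}}$ is precisely the $\mathbf{P}$-average of $f$ over $E$. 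The intuition is then immediate: to raise the average of $f$ while prescribing the $\mathbf{P}$-size $b$, one should keep the part of $A_1$ on which $f$ is largest and discard the part on which $f$ is smallest. Throughout I assume $\mathbf{P}(A_1)>0$ (otherwise only $b=0$ is admissible and the claim is vacuous) and treat $b\in(0,|A_1|_{\mathbf{P}}]$, the value $b=0$ being degenerate for the ratio.

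First I would introduce the threshold. Define the non-increasing function $h(t)=\mathbf{P}(\{x\in A_1:f(x)>t\})$, which decreases from $\mathbf{P}(A_1)$ (for $t<0$, since $f\ge 0$) to $0$ as $t\to\infty$, and set $t^{*}=\inf\{t:h(t)\le b\}$. Then $\mathbf{P}(\{x\in A_1:f>t^{*}\})\le b\le\mathbf{P}(\{x\in A_1:f\ge t^{*}\})$. If the level set $\{x\in A_1:f=t^{*}\}$ is $\mathbf{P}$-null, the choice $A_2=\{x\in A_1:f>t^{*}\}$ already has $\mathbf{P}(A_2)=b$. Otherwise the level set carries positive mass, and here is where hypothesis (2) is indispensable: since $\mathbf{P}$ is non-atomic, the classical result on the range of a non-atomic measure (Sierpi\'nski's theorem) lets me carve out a measurable subset $B\subseteq\{x\in A_1:f=t^{*}\}$ with $\mathbf{P}(B)=b-\mathbf{P}(\{f>t^{*}\})$ exactly, and I set $A_2=\{x\in A_1:f>t^{*}\}\cup B$. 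By construction $\mathbf{P}(A_2)=b$, and, crucially, $f\ge t^{*}$ everywhere on $A_2$ while $f\le t^{*}$ everywhere on $A_1\setminus A_2$.

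Finally I would verify the ratio inequality by a weighted-average argument. Writing $I_2=\int_{A_2}f\,\mathrm{d}\mathbf{P}$ and using $f\ge t^{*}$ on $A_2$ gives $I_2\ge t^{*}\mathbf{P}(A_2)=t^{*}b$, while $f\le t^{*}$ on $A_1\setminus A_2$ gives $\int_{A_1}f\,\mathrm{d}\mathbf{P}-I_2\le t^{*}(\mathbf{P}(A_1)-b)$. Combining these two bounds and clearing denominators (noting $\mathbf{P}(A_1)-b\ge 0$) yields $I_2\,\mathbf{P}(A_1)\ge\big(\int_{A_1}f\,\mathrm{d}\mathbf{P}\big)\,b$, i.e. $|A_2|_{\mathbf{Q}\mathbf{P}}\ge|A_1|_{\mathbf{Q}\mathbf{P}}$, as required; the case $b=|A_1|_{\mathbf{P}}$ is trivial with $A_2=A_1$. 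The only genuinely delicate point is the exact attainment of the $\mathbf{P}$-measure $b$ when the distribution of $f$ under $\mathbf{P}$ has an atom at $t^{*}$, and this is resolved entirely by the non-atomicity of $\mathbf{P}$; everything else reduces to the elementary observation that peeling off the largest values of $f$ can only increase its average.
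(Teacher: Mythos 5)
Your proof is correct, and it is built around the same core construction as the paper's: take $A_2$ to be (essentially) a super-level set of the Radon--Nikodym derivative $f=\mathrm{d}\mathbf{Q}/\mathrm{d}\mathbf{P}$ inside $A_1$, then argue that keeping the largest values of $f$ can only raise its $\mathbf{P}$-average. Where you genuinely diverge is in the two technical steps, and in both you are more careful than the paper. First, the paper simply asserts that, $\mathbf{P}$ being continuous, one can choose the lower threshold $c$ of the set $\{\omega\in A_1: y(\omega)\in[c,\overline a]\}$ so that its $\mathbf{P}$-measure equals $b$ exactly; this is not literally true, because even when $\mathbf{P}$ is non-atomic the pushforward of $\mathbf{P}$ under $f$ may have atoms, so the map $c\mapsto\mathbf{P}\left(\{f\in[c,\overline a]\}\cap A_1\right)$ can jump over $b$. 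Your proof closes this gap: you take $\{f>t^{*}\}$ and, when the level set $\{f=t^{*}\}$ carries positive mass, invoke non-atomicity of $\mathbf{P}$ (Sierpi\'nski's theorem on the range of a non-atomic measure) to carve out a subset with exactly the missing mass --- which is, incidentally, the same boundary-randomization device the paper itself needs later in Theorem \ref{Theorem:likelihoodratio}, so your version of the lemma is the one the paper actually uses downstream. Second, for the ratio inequality the paper splits into the cases $c\ge|A_1|_{\mathbf{Q}\mathbf{P}}$ and $c<|A_1|_{\mathbf{Q}\mathbf{P}}$ and finishes the latter by contradiction, whereas your direct averaging bound, combining $\int_{A_2}f\,\mathrm{d}\mathbf{P}\ge t^{*}b$ with $\int_{A_1\setminus A_2}f\,\mathrm{d}\mathbf{P}\le t^{*}\left(\mathbf{P}(A_1)-b\right)$ and clearing denominators, is shorter and makes transparent that only the ordering of $f$ across the cut at $t^{*}$ matters. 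A further small gain: the paper works with $\max_{\omega\in A_1}y(\omega)$ and $\min_{\omega\in A_1}y(\omega)$, which for a merely measurable $y$ should be essential suprema/infima; your formulation never needs a maximum at all.
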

\begin{proof}
Since $\mathbf{Q}$ is absolutely continuous w.r.t. $\mathbf{P}$, we can obtain the Radon-Nikodym derivative as
\[y(\omega)=\frac{\mathrm{d}\mathbf{Q}}{\mathrm{d}\mathbf{P}}(\omega).\]
Then the likelihood ratio for $A_1$ can be rewritten by
\begin{align*}
|A_1|_{\mathbf{Q}\mathbf{P}} = & \frac{\int_{A_1} \mathrm{d}\mathbf{Q}}{\int_{A_1} \mathrm{d}\mathbf{P}}=  \frac{\int_{A_1} y \mathrm{d}\mathbf{P}}{\int_{A_1} \mathrm{d}\mathbf{P}},
\end{align*}
and it follows that
\begin{align}
\max_{\omega \in A_1}y(\omega) \geq & |A_1|_{\mathbf{Q}\mathbf{P}}, \\
\min_{\omega \in A_1}y(\omega) \leq & |A_1|_{\mathbf{Q}\mathbf{P}}.
\end{align}
If
 $\max_{\omega \in A_1}y(\omega) =  |A_1|_{\mathbf{Q}\mathbf{P}}$
we can obtain
\[y(\omega)=|A_1|_{\mathbf{Q}\mathbf{P}}, \: \forall \omega \in A_1.  \]
The problem becomes trivial.

In the following, we consider the case where
$\max_{\omega \in A_1}y(\omega) >  |A_1|_{\mathbf{Q}\mathbf{P}}.$
Define
\[\overline{a} \triangleq \max_{\omega \in A_1}y(\omega). \]
Let
\begin{align} \label{Eqn:A2}
A_2=\{\omega \in A_1: y(\omega)\in [c,\overline{a}]\}.
\end{align}
Since $\mathbf{P}$ is continuous, we can always find an appropriate $c$ such that
\[|A_2|_{\mathbf{P}}=  b.\]
We now prove that $A_2$ defined in \eqref{Eqn:A2} satisfies $|A_2|_{\mathbf{Q}\mathbf{P}} \geq  |A_1|_{\mathbf{Q}\mathbf{P}}$. If $ c\geq |A_1|_{\mathbf{Q}\mathbf{P}}$,
\[|A_2|_{\mathbf{Q}\mathbf{P}}=\frac{\int_{A_2} y \mathrm{d}\mathbf{P}}{\int_{A_2} \mathrm{d}\mathbf{P}} \geq c \frac{\int_{A_1} \mathrm{d}\mathbf{P}}{\int_{A_2} \mathrm{d}\mathbf{P}} \geq |A_1|_{\mathbf{Q}\mathbf{P}}.\]
If $ c < |A_1|_{\mathbf{Q}\mathbf{P}}$, we obtain
\[|A_1\backslash A_2|_{\mathbf{Q}\mathbf{P}}=\frac{\int_{A_1\backslash A_2} y \mathrm{d}\mathbf{P}}{\int_{A_1\backslash A_2} \mathrm{d}\mathbf{P}} \leq c \frac{\int_{A_1} \mathrm{d}\mathbf{P}}{\int_{A_1} \mathrm{d}\mathbf{P}} < |A_1|_{\mathbf{Q}\mathbf{P}}.\]
Suppose $|A_2|_{\mathbf{Q}\mathbf{P}} < |A_1|_{\mathbf{Q}\mathbf{P}}$. It then follows that
\begin{align*}
|A_1|_{\mathbf{Q}\mathbf{P}}=&\frac{ \int_{A_2} \mathrm{d}\mathbf{Q} +\int_{A_1\backslash A_2} \mathrm{d}\mathbf{Q}  }{  \int_{A_1} \mathrm{d}\mathbf{P} + \int_{A_1\backslash A_2} \mathrm{d}\mathbf{P}   }\\
< &|A_1|_{\mathbf{Q}\mathbf{P}} \frac{ \int_{A_1} \mathrm{d}\mathbf{P} + \int_{A_1\backslash A_2} \mathrm{d}\mathbf{P}  }{  \int_{A_1} \mathrm{d}\mathbf{P} + \int_{A_1\backslash A_2} \mathrm{d}\mathbf{P}   } \\
=&|A_1|_{\mathbf{Q}\mathbf{P}}.
\end{align*}
Thus $|A_2|_{\mathbf{Q}\mathbf{P}} \geq |A_1|_{\mathbf{Q}\mathbf{P}}$. The proof thus is complete.
\end{proof}

\begin{theorem} \label{Theorem:EqualGreaterKL}
Given any censoring strategy $\underline{\mu} \in \mathcal{U}_{\epsilon}$, we can always find another strategy $\overline{\mu} \in \mathcal{U}_{\epsilon}$ that satisfies
\begin{align}
\mathbb{E}_{\infty}[\overline{\mu}(X_k)]~ =~ &\epsilon, \\
\mathbb{E}^{\overline{\mu}}_{k}[ \ln L^{\overline{\mu}}(X_k,\theta_k) ]~ \geq ~& \mathbb{E}^{\underline{\mu}}_{k}[ \ln L^{\underline{\mu}}(X_k,\theta_k) ]. \label{Eqn:BiggerK-L}
\end{align}
\end{theorem}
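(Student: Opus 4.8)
The plan is to recast each censoring strategy through its \emph{no-send} region, reduce that region down to exactly the energy budget by invoking Lemma \ref{Lemma:greaterratio}, and then check that this reduction cannot lower the post-censoring divergence. First I would attach to $\underline{\mu}$ its censoring region $\underline{\Omega}_c \triangleq \{x : \underline{\mu}(x)=0\}$ and record the two bookkeeping facts that drive everything. The energy spent is $\mathbb{E}_{\infty}[\underline{\mu}(X_k)] = \mathbb{P}_{\infty}(\Omega\setminus\underline{\Omega}_c) = 1-\mathbb{P}_{\infty}(\underline{\Omega}_c)$, so admissibility $\underline{\mu}\in\mathcal{U}_{\epsilon}$ is precisely $\mathbb{P}_{\infty}(\underline{\Omega}_c)\geq 1-\epsilon$, while using up the energy means equality. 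From \eqref{Eqn:likelihoodratio}, since $\ln L^{\underline{\mu}}$ equals $\ell$ on the send region and equals the constant $\ln\big(\mathbb{P}_1(\underline{\Omega}_c)/\mathbb{P}_{\infty}(\underline{\Omega}_c)\big)$ on $\underline{\Omega}_c$, the post-censoring divergence is
\begin{align*}
\mathbb{E}^{\underline{\mu}}_{k}[\ln L^{\underline{\mu}}(X_k,\theta_k)] = \int_{\Omega\setminus\underline{\Omega}_c}\ell\,\mathrm{d}\mathbb{P}_1 + \mathbb{P}_1(\underline{\Omega}_c)\ln\frac{\mathbb{P}_1(\underline{\Omega}_c)}{\mathbb{P}_{\infty}(\underline{\Omega}_c)},
\end{align*}
which is finite by assumption \eqref{Eqn:finiteAssumption}.

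Next I would build $\overline{\mu}$ by shrinking the no-send region. Applying Lemma \ref{Lemma:greaterratio} with $\mathbf{P}=\mathbb{P}_{\infty}$ and $\mathbf{Q}=\mathbb{P}_1$ (the hypotheses hold: $\mathbb{P}_1$ is absolutely continuous with respect to $\mathbb{P}_{\infty}$ by mutual local absolute continuity, and $\mathbb{P}_{\infty}$ is continuous by Assumption \ref{Assumption:nomasspoint}), with $A_1=\underline{\Omega}_c$ and $b=1-\epsilon\leq\mathbb{P}_{\infty}(\underline{\Omega}_c)$, produces a set $\overline{\Omega}_c\subseteq\underline{\Omega}_c$ with $\mathbb{P}_{\infty}(\overline{\Omega}_c)=1-\epsilon$ and $\mathbb{P}_1(\overline{\Omega}_c)/\mathbb{P}_{\infty}(\overline{\Omega}_c)\geq\mathbb{P}_1(\underline{\Omega}_c)/\mathbb{P}_{\infty}(\underline{\Omega}_c)$. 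Taking $\overline{\mu}$ to be the strategy whose censoring region is $\overline{\Omega}_c$, the first assertion is immediate, since $\mathbb{E}_{\infty}[\overline{\mu}(X_k)] = 1-\mathbb{P}_{\infty}(\overline{\Omega}_c) = \epsilon$, and in particular $\overline{\mu}\in\mathcal{U}_{\epsilon}$.

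The crux, and the step I expect to be the main obstacle, is verifying \eqref{Eqn:BiggerK-L}: that promoting the mass $\Delta\triangleq\underline{\Omega}_c\setminus\overline{\Omega}_c$ from the lumped no-send symbol to resolved observations does not decrease the divergence. Writing $p_0=\mathbb{P}_{\infty}(\overline{\Omega}_c)$, $q_0=\mathbb{P}_1(\overline{\Omega}_c)$, $p_1=\mathbb{P}_{\infty}(\underline{\Omega}_c)$ and $q_1=\mathbb{P}_1(\underline{\Omega}_c)$, the two divergences coincide on $\Omega\setminus\underline{\Omega}_c$, so their difference reduces to
\begin{align*}
\mathbb{E}^{\overline{\mu}}_{k}[\ln L^{\overline{\mu}}] - \mathbb{E}^{\underline{\mu}}_{k}[\ln L^{\underline{\mu}}] = \int_{\Delta}\ell\,\mathrm{d}\mathbb{P}_1 + q_0\ln\frac{q_0}{p_0} - q_1\ln\frac{q_1}{p_1}.
\end{align*}
I would lower-bound the integral by Jensen's inequality for the convex map $t\mapsto t\ln t$ applied to the density $\mathrm{d}\mathbb{P}_1/\mathrm{d}\mathbb{P}_{\infty}$ averaged over $\Delta$, giving $\int_{\Delta}\ell\,\mathrm{d}\mathbb{P}_1\geq (q_1-q_0)\ln\frac{q_1-q_0}{p_1-p_0}$, and then invoke the two-term log-sum inequality $q_0\ln\frac{q_0}{p_0}+(q_1-q_0)\ln\frac{q_1-q_0}{p_1-p_0}\geq q_1\ln\frac{q_1}{p_1}$, which collapses the right-hand side to a nonnegative quantity and yields \eqref{Eqn:BiggerK-L}. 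A cleaner route to the same inequality is to note that the output of $\underline{\mu}$ is a deterministic coarsening of that of $\overline{\mu}$ — they agree on $\Omega\setminus\underline{\Omega}_c$, whereas $\underline{\mu}$ merges all of $\underline{\Omega}_c$ into the single no-send symbol — so the monotonicity is a direct instance of the invariance (data-processing) property of K-L divergence already used in Theorem \ref{Theorem:AsymOptCuSum}; the likelihood-ratio bound furnished by Lemma \ref{Lemma:greaterratio} is in fact not needed here and is the ingredient reserved for the single-interval structure result. Finally I would dispatch the degenerate cases $\epsilon=1$ (where $p_0=q_0=0$ and $\overline{\mu}$ transmits everything) and $\mathbb{P}_{\infty}(\Delta)=0$ (where $\Delta$ is $\mathbb{P}_{\infty}$-null, hence $\mathbb{P}_1$-null, and $\overline{\mu}$ coincides with $\underline{\mu}$), in both of which the offending terms vanish under the convention $0\ln 0=0$, completing the argument.
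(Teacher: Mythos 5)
Your proof is correct, and its skeleton is the same as the paper's: identify $\underline{\mu}$ with its \emph{no send} region $\underline{\Omega}_c$, apply Lemma \ref{Lemma:greaterratio} with $\mathbf{P}=\mathbb{P}_{\infty}$, $\mathbf{Q}=\mathbb{P}_1$, $b=1-\epsilon$ to extract $\overline{\Omega}_c\subseteq\underline{\Omega}_c$ with $\mathbb{P}_{\infty}(\overline{\Omega}_c)=1-\epsilon$, and let $\overline{\mu}$ censor exactly on $\overline{\Omega}_c$. Where you genuinely depart from the paper is in the verification of \eqref{Eqn:BiggerK-L}. The paper splits the difference of divergences into a term over $\overline{\Omega}_c$ and terms over $\Delta=\underline{\Omega}_c\setminus\overline{\Omega}_c$, and asserts nonpositivity ``from \eqref{Eqn:greaterratioTheo} and the invariance properties of K-L divergence''; read term by term this does not work, because under the lemma's construction $\Delta$ is the \emph{low} likelihood-ratio slice of $\underline{\Omega}_c$, so the $\Delta$-terms $\int_{\Delta}\bigl[\ln(q_1/p_1)-\ell\bigr]\mathrm{d}\mathbb{P}_1$ (in your notation $p_i,q_i$) can be strictly positive whenever the cutoff $c$ falls below $q_1/p_1$; the pieces must be combined, and once combined the bound is precisely your Jensen-plus-log-sum computation. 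Your writeup makes this explicit, and your further observation is right: the ratio inequality \eqref{Eqn:greaterratioTheo} is not needed at all, since the output of $\underline{\mu}$ is a deterministic coarsening of the output of $\overline{\mu}$ (merge $\Delta$ into the no-send symbol), so \eqref{Eqn:BiggerK-L} is pure data processing; the only ingredient required from Lemma \ref{Lemma:greaterratio} is the continuity half, i.e., the existence under Assumption \ref{Assumption:nomasspoint} of a subset of prescribed $\mathbb{P}_{\infty}$-measure. What your route buys is a stronger and cleaner statement --- \emph{any} measurable $\overline{\Omega}_c\subseteq\underline{\Omega}_c$ with $\mathbb{P}_{\infty}(\overline{\Omega}_c)=1-\epsilon$ satisfies \eqref{Eqn:BiggerK-L}, not just the high-ratio one --- together with a transparent treatment of the degenerate cases $\epsilon=1$ and $\mathbb{P}_{\infty}(\Delta)=0$; the paper's reliance on \eqref{Eqn:greaterratioTheo} buys nothing extra for this theorem (that inequality earns its keep only in motivating the structural result of Theorem \ref{Theorem:likelihoodratio}).
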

\begin{proof}
A censoring strategy can be fully characterised by its \emph{send} region or \emph{no send} region. Suppose strategy $\underline{\mu}$ is given by
\begin{align}
\theta_k=\left\{
\begin{array}{ll}
0, & \text{if }  X_k \in \Omega_{\underline{\mu}},\\
1, & \text{otherwise},
\end{array} \right.
\end{align}
where $\Omega_{\underline{\mu}} \in \Omega$ is the \emph{no send} region associated with the strategy $\underline{\mu}$.
It follows from $\underline{\mu} \in \mathcal{U}_{\epsilon}$ that
\begin{align}
\int_{\Omega_{\underline{\mu}}} \mathrm{d}\mathbb{P}_{\infty} \geq 1-\epsilon.
\end{align}
Then under Assumption \ref{Assumption:nomasspoint} and by Lemma \ref{Lemma:greaterratio}, we can always find $\Omega_{\overline{\mu}} \subseteq \Omega_{\underline{\mu}}$ such that
\begin{align}
\int_{ \Omega_{\overline{\mu}}} \mathrm{d}\mathbb{P}_{\infty} =& 1-\epsilon, \\
\frac{ \int_{\Omega_{\overline{\mu}}} \mathrm{d}\mathbb{P}_{1} }{ \int_{\Omega_{\overline{\mu}}} \mathrm{d}\mathbb{P}_{\infty} } \geq & \frac{ \int_{\Omega_{\underline{\mu}}} \mathrm{d}\mathbb{P}_{1} }{ \int_{\Omega_{\underline{\mu}}} \mathrm{d}\mathbb{P}_{\infty} } \label{Eqn:greaterratioTheo}
\end{align}
Define strategy $\overline{\mu}$ by
\begin{align}
\theta_k=\left\{
\begin{array}{ll}
0, & \text{if }  X_k \in \Omega_{\overline{\mu}},\\
1, & \text{otherwise}.
\end{array} \right.
\end{align}
We obtain
\begin{align*}
&\mathbb{E}^{\underline{\mu}}_{k}[ \ln L^{\underline{\mu}}(X_k,\theta_k) ] - \mathbb{E}^{\overline{\mu}}_{k}[ \ln L^{\overline{\mu}}(X_k,\theta_k) ] \\
= & \int_{ \Omega_{\underline{\mu}} } \ln \frac{ \int_{\Omega_{\underline{\mu}}}\mathrm{d}\mathbb{P}_{1}  }{ \int_{\Omega_{\underline{\mu}}} \mathrm{d}\mathbb{P}_{\infty} } \mathrm{d}\mathbb{P}_{1}     +
\int_{ \Omega\backslash \Omega_{\underline{\mu}} } \ln \frac{ \mathrm{d}\mathbb{P}_{1} }{ \mathrm{d}\mathbb{P}_{\infty}} \mathrm{d}\mathbb{P}_{1} \\
&-\, \int_{ \Omega_{\overline{\mu}} } \ln \frac{ \int_{\Omega_{\overline{\mu}}}\mathrm{d}\mathbb{P}_{1}  }{ \int_{\Omega_{\overline{\mu}}} \mathrm{d}\mathbb{P}_{\infty} } \mathrm{d}\mathbb{P}_{1}     -
\int_{ \Omega\backslash \Omega_{\overline{\mu}} } \ln \frac{ \mathrm{d}\mathbb{P}_{1} }{ \mathrm{d}\mathbb{P}_{\infty}} \mathrm{d}\mathbb{P}_{1} \\
=& \int_{ \Omega_{\overline{\mu}} } \left[ \ln \frac{ \int_{\Omega_{\underline{\mu}}}\mathrm{d}\mathbb{P}_{1}  }{ \int_{\Omega_{\underline{\mu}}} \mathrm{d}\mathbb{P}_{\infty} } - \ln \frac{ \int_{\Omega_{\overline{\mu}}}\mathrm{d}\mathbb{P}_{1}  }{ \int_{\Omega_{\overline{\mu}}} \mathrm{d}\mathbb{P}_{\infty} }\right]\mathrm{d}\mathbb{P}_{1} \\
& +\, \int_{ \Omega_{\underline{\mu}}\backslash \Omega_{\overline{\mu}} } \ln \frac{ \int_{\Omega_{\underline{\mu}}}\mathrm{d}\mathbb{P}_{1}  }{ \int_{\Omega_{\underline{\mu}}} \mathrm{d}\mathbb{P}_{\infty} } \mathrm{d}\mathbb{P}_{1}  \\
& -\, \int_{ \Omega_{\underline{\mu}}\backslash \Omega_{\overline{\mu}} } \ln \frac{ \mathrm{d}\mathbb{P}_{1} }{ \mathrm{d}\mathbb{P}_{\infty}} \mathrm{d}\mathbb{P}_{1} \\
\leq& 0,
\end{align*}
where the inequality follows from \eqref{Eqn:greaterratioTheo} and the invariance properties of K-L divergence. The proof thus is complete.
\end{proof}
\begin{remark}
The intuition behind the above theorem is that the more energy the sensor uses for communication with the decision maker, the bigger K-L divergence of available observations and hence the better asymptotic detection performance can be obtained.
\end{remark}

In the following, we will show that the asymptotically optimal censoring strategy that maximizes the post-censoring K-L divergence has a very special structure, i.e., the likelihood ratio of the \emph{no send} region is a single interval. As the proof relies on the optimal quantization structure established in \cite{tsitsiklis1993extremal}, we will introduce the concept of randomized likelihood-ratio quantizer (RLRQ) first.
\begin{definition}
Let the threshold vector be $\vec{t}=(t_1,\ldots,t_{D-1}) \in \mathbb{R}_{+}^{D-1}$ with $0 \leq t_1 \leq \cdots \leq t_{D-1} \leq \infty$ and the associated real-valued random vector be $
\vec{r}=(r_1,\ldots,r_{D-1})  \in \mathbb{R}^{D-1}$. The elements $r_1,\ldots,r_{D-1}$ are independent of each other. The intervals associated with $\vec{t}$ are defined by $I_1=(0,t_1), I_2=(t_1,t_2),\ldots,I_{D-1}=(t_{D-2},t_{D-1}), I_{D}=(t_{D-1},\infty)$.

A quantizer $\phi: \Omega \mapsto \{1,\ldots,D\}$ is a monotone RLRQ with threshold vector $\vec{t}$ and random vector $\vec{r}$ if
\begin{align*}
&\text{Pr}\left( L(\omega)\in I_d \,\, \& \,\, \phi(\omega)\neq d \right)=0, \quad \forall d, \\
&\mathbf{1}_{\{L(\omega)=t_d\,\, \& \,\, \phi(\omega)=d\,\, \& \,\, r_d\in\mathbb{R}_d \}}\\
&\,+\,\mathbf{1}_{\{L(\omega)=t_d\,\, \& \,\, \phi(\omega)=d+1\,\, \& \,\, r_d\in\mathbb{R}\backslash\mathbb{R}_d\}}=1, \quad \forall d , \alpha_d,
\end{align*}
where $L(\cdot)$ is the likelihood ratio function and $\mathbb{R}_d \subseteq \mathbb{R}$ is the ``selection" set, which together with the random variable $r_d$ determines the quantization output of those points that have the likelihood ratio on the boundary $t_d$.

A quantizer $\phi$ is defined to be a RLRQ if there exists a permutation map $\pi$ such that $\pi \circ \phi$ is a monotone RLRQ.
\end{definition}
\begin{remark}
With the above definition, the RLRQ reduces to the deterministic one when the likelihood ratio $L(\omega)$ belongs to the interior of the intervals. The quantizer is randomized with the aid of carefully designed random variable $r_d$ and associated selection set $\mathbb{R}_d$ on the boundary $t_d$.
\end{remark}


\begin{theorem} \label{Theorem:likelihoodratio}
The following randomized likelihood-ratio-based censoring strategy can achieve the maximal post-censoring K-L divergence defined in \eqref{Eqn:optimalCensoringDef}: $\forall k$,
\begin{align}  \label{Eqn:optcensor}
\theta_k=\left\{
\begin{array}{ll}
0, & \text{if }  \underline{L}_c<L(X_k)<\overline{L}_{c} ,\\
1, & \text{if }  L(X_k)<\underline{L}_{c}\: \text{or}\: L(X_k)>\overline{L}_{c},
\end{array} \right.
\end{align}
and if $X_k$ is in the boundary, i.e., $L(X_k)=\underline{L}_c$ or $\overline{L}_{c}$, then $\theta_k$ is determined by not only $L(X_k)$ but also the auxiliary independent random variables $\kappa_1, \kappa_2 \in \mathbb{R}$, respectively. Specifically, when $L(X_k)=\underline{L}_c$,
\begin{align}
\theta_k=\left\{
\begin{array}{ll}
0, & \text{if }  \kappa_1 \in \mathbb{R}_c^{1} ,\\
1, & \text{otherwise},
\end{array} \right.
\end{align}
and when
$L(X_k)=\overline{L}_c$,
\begin{align}
\theta_k=\left\{
\begin{array}{ll}
0, & \text{if }  \kappa_2 \in \mathbb{R}_c^{2} ,\\
1, & \text{otherwise}.
\end{array} \right.
\end{align}
The threshold parameters $\underline{L}_c, \overline{L}_c$ and the censoring region for the auxiliary random variable $\mathbb{R}_c^{1}, \mathbb{R}_c^{2}$ should be chosen such that
\begin{align*}
&\int_{L(X_1)\in(\underline{L}_c, \overline{L}_c)} \mathrm{d}\mathbb{P}_{\infty}+ \mathbb{P}_{\infty}(L(X_1)=\underline{L}_c) \text{Pr}\{\kappa_1\in\mathbb{R}_c^{1}\} \\
& +\,\mathbb{P}_{\infty}(L(X_1)=\overline{L}_c) \text{Pr}\{\kappa_2\in\mathbb{R}_c^{2}\} \,=\, 1- \epsilon. \addtag \label{Eqn:EqualCons}
\end{align*}
\end{theorem}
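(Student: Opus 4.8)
The plan is to recast the constrained maximization in \eqref{Eqn:optimalCensoringDef} as an optimal quantizer-design problem and then invoke the extremal property of likelihood-ratio quantizers from \cite{tsitsiklis1993extremal}. A stationary censoring strategy is fully specified by its \emph{no send} region $\Omega_c$, and by Theorem \ref{Theorem:EqualGreaterKL} an optimal strategy uses up the energy, so I may restrict attention to regions with $\int_{\Omega_c}\mathrm{d}\mathbb{P}_\infty=1-\epsilon$. Writing $P^c_1=\int_{\Omega_c}\mathrm{d}\mathbb{P}_1$ and $P^c_\infty=\int_{\Omega_c}\mathrm{d}\mathbb{P}_\infty$, the post-censoring divergence splits as
\[
\mathbb{E}^{\mu}_{k}[\ln L^{\mu}(X_k,\theta_k)]=\mathbb{D}(\mathbb{P}_1||\mathbb{P}_\infty)-\Big(\int_{\Omega_c}\ln\tfrac{\mathrm{d}\mathbb{P}_1}{\mathrm{d}\mathbb{P}_\infty}\,\mathrm{d}\mathbb{P}_1-P^c_1\ln\tfrac{P^c_1}{P^c_\infty}\Big),
\]
so maximizing the divergence is equivalent to minimizing the bracketed \emph{merging loss}, which is nonnegative by Jensen's inequality applied to the convex map $t\mapsto t\ln t$.

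Next I would identify this as exactly the quantization problem treated in \cite{tsitsiklis1993extremal}: the merging loss is an Ali--Silvey / $f$-divergence-type functional of the cell probabilities (here with $f(t)=t\ln t$), for which the loss-minimizing quantizer is a (possibly randomized) likelihood-ratio quantizer whose cells are intervals of $L$. To see directly why the \emph{no send} cell must be a \emph{single} interval, I would examine the marginal effect of transferring an infinitesimal amount of $\mathbb{P}_\infty$-mass located at likelihood ratio $y=L(x)$ into $\Omega_c$. Adjoining a multiplier $\lambda$ for the constraint $P^c_\infty=1-\epsilon$ and setting $\rho=P^c_1/P^c_\infty$, the marginal loss equals $g(y)=y\ln y-y\ln\rho-y+\rho-\lambda$, and since $g''(y)=1/y>0$ the function $g$ is strictly convex; hence its sublevel set $\{y:g(y)\le 0\}$ is an interval. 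A point is therefore assigned to the \emph{no send} region precisely when $\underline L_c\le L(x)\le\overline L_c$, which is the structure asserted in \eqref{Eqn:optcensor}. The thresholds $(\underline L_c,\overline L_c)$, together with $\rho$ and $\lambda$, are pinned down self-consistently by the requirement that the region carry $\mathbb{P}_\infty$-measure exactly $1-\epsilon$.

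Finally I would handle the boundary. Under Assumption \ref{Assumption:nomasspoint} the observation $X_k$ has no atoms, yet $L(X_k)$ may still place positive mass on a threshold $\underline L_c$ or $\overline L_c$ (e.g.\ when $L$ is constant on a set of positive measure), so a deterministic interval rule cannot in general attain measure $1-\epsilon$ exactly. This is where the randomized likelihood-ratio quantizer enters: the auxiliary variables $\kappa_1,\kappa_2$ and selection sets $\mathbb{R}^1_c,\mathbb{R}^2_c$ split the boundary atoms so that \eqref{Eqn:EqualCons} holds with equality, matching the RLRQ of \cite{tsitsiklis1993extremal}. I expect the main obstacle to be precisely the step that forces the \emph{no send} set to be a single interval rather than a union of several---namely the strict convexity of the marginal-loss function $g$ coupled with the fixed-point dependence of the thresholds on $\rho=P^c_1/P^c_\infty$---together with the careful splitting of boundary atoms needed to meet the \emph{exact} constraint \eqref{Eqn:EqualCons}; once these are established, combining them with Theorem \ref{Theorem:EqualGreaterKL} completes the argument.
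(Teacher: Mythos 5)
Your proposal is correct in substance, but it reaches the interval structure by a genuinely different route than the paper. The paper's proof never computes first-order conditions: it replaces the censoring sensor by one that transmits a $D$-symbol quantization of the observation (with an extra symbol $0$ meaning \emph{no send}), invokes the convexity of the K-L divergence together with Proposition 3.5 of \cite{tsitsiklis1993extremal} to conclude that, for every finite $D$, the constrained optimal quantizer is a randomized likelihood-ratio quantizer, and then argues that censoring is the limiting case $D=\infty$, so the structure persists. You instead work directly with the censoring objective: your decomposition
\begin{align*}
\mathbb{E}^{\mu}_{k}[\ln L^{\mu}(X_k,\theta_k)]=\mathbb{D}(\mathbb{P}_1||\mathbb{P}_{\infty})-\Bigl(\int_{\Omega_c}\ln\tfrac{\mathrm{d}\mathbb{P}_1}{\mathrm{d}\mathbb{P}_{\infty}}\,\mathrm{d}\mathbb{P}_1-P^c_1\ln\tfrac{P^c_1}{P^c_{\infty}}\Bigr)
\end{align*}
is correct, the nonnegativity of the merging loss via Jensen applied to $t\mapsto t\ln t$ is correct, and your exchange computation of the marginal cost is right: adding $\mathbb{P}_{\infty}$-mass at likelihood ratio $y$ changes the loss at rate $y\ln y-y\ln\rho-y+\rho$, so with the multiplier $\lambda$ one gets $g(y)=y\ln y-y\ln\rho-y+\rho-\lambda$ with $g''(y)=1/y>0$, and strict convexity forces the optimal \emph{no send} set to be a sublevel set of $g$, hence a single interval in $L$. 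Both arguments use Theorem \ref{Theorem:EqualGreaterKL} identically to get \eqref{Eqn:EqualCons}, and both split atoms of $L(X_k)$ at the thresholds by randomization. What each buys: the paper's route is short and inherits both the structure and (for each finite $D$) existence of an optimizer from \cite{tsitsiklis1993extremal}, at the cost of an informal passage $D\to\infty$; your route is self-contained, exhibits the thresholds explicitly as the endpoints of $\{y:g(y)\le 0\}$ (an interval necessarily containing $\rho=P^c_1/P^c_{\infty}$), and explains transparently why the region is one interval rather than a union of several.

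One caveat you should close to make this a complete proof: the marginal/exchange computation is a necessary condition at an optimum, so it characterizes a maximizer only if one exists. You would need either an existence step (e.g., maximize over randomized rules $\mu:\Omega\to[0,1]$ with $\int(1-\mu)\,\mathrm{d}\mathbb{P}_{\infty}=1-\epsilon$, a weak-$*$ compact set on which the merging loss is continuous since $\mathbb{D}(\mathbb{P}_1||\mathbb{P}_{\infty})<\infty$), or an upgraded exchange argument showing directly that any admissible region can be replaced by an interval-structured one with no smaller post-censoring divergence. This is exactly the role the citation to \cite{tsitsiklis1993extremal} plays in the paper's proof, so the gap is of the same order as the paper's own informality in letting $D\to\infty$, but it should be stated rather than left implicit.
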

\begin{proof}
The equality constraint in \eqref{Eqn:EqualCons} follows directly from Theorem \ref{Theorem:EqualGreaterKL}. We thus focus on the proof that the optimal censoring strategy has the structure described in \eqref{Eqn:optcensor}.

In the setting of censoring, it is assumed that when the sensor decides to send its observation, the ``real" observations are sent without any quantization. To prove the likelihood-ratio-based structure, we first assume that the sensor sends quantized observations instead of unquantized ones. Specifically, if the observations are deemed as ``uninformative", the sensor sends nothing; the sensor sends one of the $D$ symbols otherwise. Mathematically, the sensor adopts a quantization rule as:
\[\phi: \Omega \mapsto \{0,1,\ldots,D\},\]
where if the observation is mapped to $0$, it means that the sensor sends nothing. Then we study the structure of the optimal quantization rule that maximizes the K-L divergence after quantization, which is given by
\begin{align}
\mathbb{D}^{q} \triangleq \sum_{d=0}^{D} \mathbb{P}_{1}(\phi(\omega)=i)   \ln \frac{\mathbb{P}_{1}(\phi(\omega)=i)}{\mathbb{P}_{\infty}(\phi(\omega)=i)},
\end{align}
subject to
\begin{align}
\mathbb{P}_{\infty}(\phi(\omega)=0)=1-\epsilon.
\end{align}
Because of the convexity of the K-L divergence \cite[P. $32$]{cover2006elements}, under the finiteness assumption in \eqref{Eqn:finiteAssumption}, from Proposition $3.5$ in \cite{tsitsiklis1993extremal}, we know that the above optimal quantization rule has randomized likelihood-ratio-based structure. Note that the censoring strategy can be regarded as the special quantization case where $D=\infty$. Since the randomized likelihood ratio structure holds for every finite $D$, so does the censoring strategy. The proof thus is complete.
\end{proof}

\begin{remark}
We assume there is no point mass for $X_k\in\Omega$ under either $\mathbb{P}_{1}$ or $\mathbb{P}_{\infty}$, but it is likely that there exists point mass for the likelihood ratio $L(X_k)$. We hence consider splitting those points that belongs to the boundary by randomization. We should also note that for any randomized likelihood-ratio-based censoring strategy, there always  exists a deterministic observation-based strategy that has the same post-censoring K-L divergence. If there is no point mass in the boundary, i.e., $\mathbb{P}_{\infty}(L(X_k)=\underline{L}_c)=0 $ and $\mathbb{P}_{\infty}(L(X_k)=\overline{L}_c)=0$, both the likelihood-ratio-based and observation-based strategies become deterministic and the optimal censoring strategy is unique; whereas there are an infinite observation-based strategies otherwise.
\end{remark}
\begin{remark}
Rago et al. \cite{rago1996censoring} first introduced the concept of censoring strategy. The authors proved that with several different detection performance indices, the likelihood ratio of the \emph{no send} region is one single interval. In particular, the authors stated in Theorem 3 of \cite{rago1996censoring} that to maximize the Ali-Silvey distance, the \emph{no send} region should be one single interval, which is very similar to our result.
An important limitation of \cite{rago1996censoring} is the assumption that there is no point mass for the likelihood ratio function. Our approach does not rely on such an assumption.
\end{remark}
\begin{remark}
In general, the optimal censoring region can only be obtained numerically. The likelihood-ratio-based structure coupled with the equality constraint established in Theorem \ref{Theorem:EqualGreaterKL} can significantly reduce the computation load, especially for the scenarios where the observations are of high-dimension. The search space of the optimal censoring strategy is reduced from all admissible strategies that satisfy the energy constraint defined in \eqref{Eqn:energyconstraint} to the very special class, for which we only need to determine two parameters: the upper and lower bounds of the likelihood ratio of the \emph{no send} region.
\end{remark}

%
%
%
%

\section{Numerical Examples} \label{Section:Numerical Example}

In this section, by simulations, we show that the censoring strategy has better trade-off curves between the detection performance and the energy constraint than a random policy and the DE-CuSum algorithm proposed in \cite{banerjee2012data}.
We consider the problem of mean shift detection in Gaussian noise. Specifically, we assume that before the change event happens, the observations are i.i.d. and have the distribution
$f_0\thicksim\mathcal{N}(0,1),$
whereas the observations are i.i.d. with the post-change distribution
$f_1\thicksim\mathcal{N}(1,1).$



\emph{ Example 1.}  We compare the asymptotic detection performance, when the ARL goes to infinity, of a random policy with the asymptotic optimal censoring strategy proposed in Theorem \ref{Theorem:AsymOptCuSum}. The random policy has the form as
\begin{align}
\theta_k = \left\{
\begin{array}{ll}
1, & \text{if }  p\leq \epsilon,\\
0, & \text{otherwise},
\end{array} \right.
\end{align}
where $p$ is a random variable with a uniform distribution: $p\thicksim \text{unif(0,1)}.$
Such random policy is very simple and one of the easiest strategies to be implemented locally at the sensor nodes.

For simulation, we keep the ARL around $6500$ and simulate the ``worst-worst case'' detection delay $\mathcal{D}^{\mu}_L(T)$ defined in \eqref{Eqn:DetectionDelayLorden} under various energy levels. Note that both the random policy and the censoring strategy are stationary, so the equalizer rule (e.g., \cite[Page 134]{poor2009quickest}) holds. To simulate the detection delay we can let the change event just happens at the very beginning, i.e., $\nu=1$. The simulation result is shown in Fig. \ref{Fig:censorrandom}, from which we can see that the censoring strategy significantly outperforms the random one. In particular, even when the available average energy units of each transmission is $0.1$, there are only about $5$ extra time slots delay compared with the case when there is no energy constraint ($\epsilon=1$).
\begin{figure}
  \centering
  \includegraphics[width=3in]{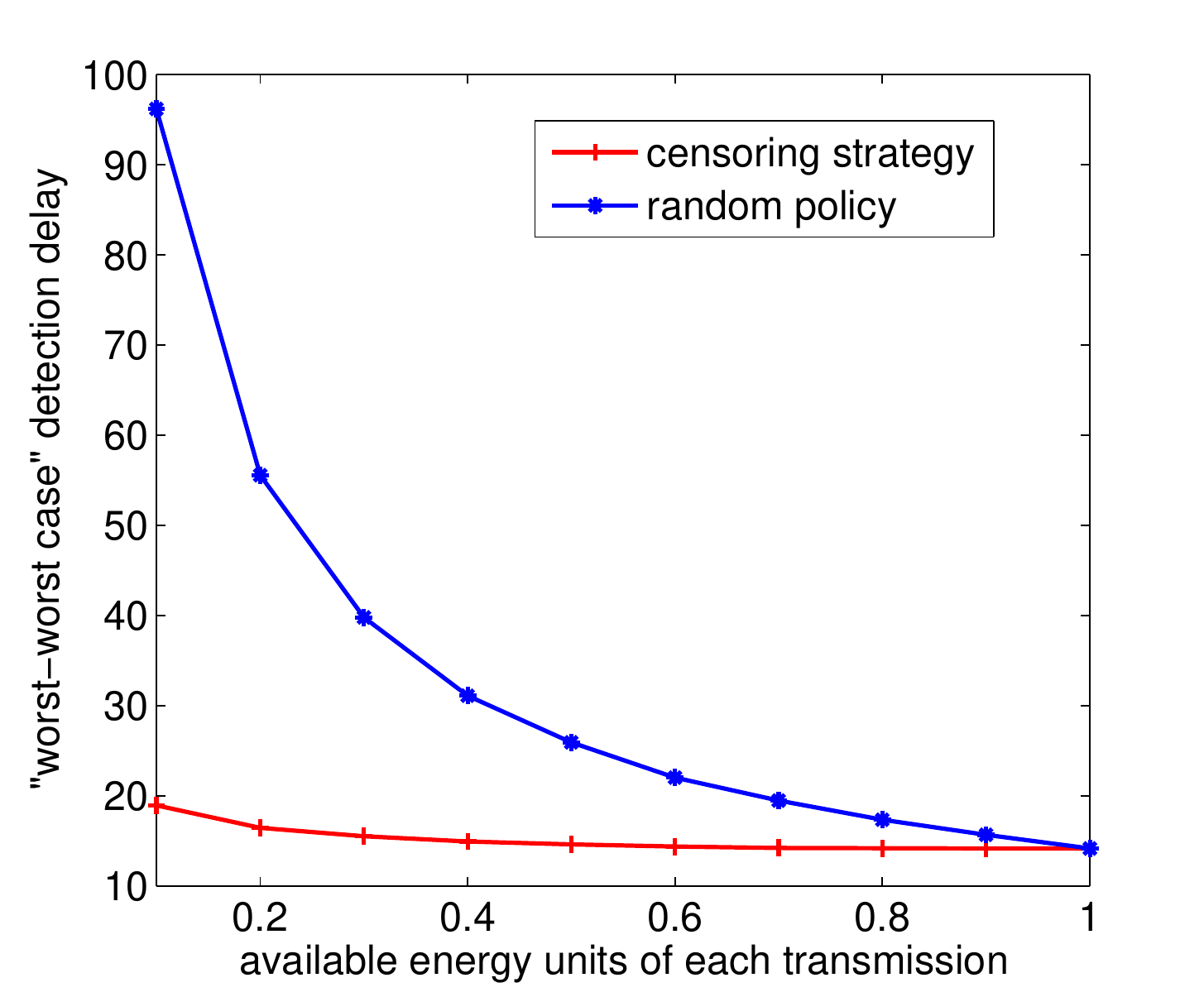}\vspace{-2mm}
  \caption{ Detection delay $\mathcal{D}^{\mu}_L(T)$ of different observation transmission scheme as a function of available average energy units of each transmission $\epsilon$. } \label{Fig:censorrandom}
  \vspace{-4mm}
\end{figure}

\emph{Example 2.}  The DE-CuSum algorithm is simulated now under the same setting as in the first example. The parameters of DE-CuSum are chosen as follows: $h=\infty$ and the deterministic incremental parameter $\mu$ (should not be confused with the notation of censoring strategy in our paper) is approximated with the following equation
\begin{align}
\mu\approx \frac{\epsilon}{1-\epsilon}\mathbb{D}(f_0||f_1). \label{Eqn:approximuDECuSum}
\end{align}
Note that for the DE-CuSum algorithm, the sequence of information available at the decision maker is correlated, so the equalizer rule does not hold any more, which causes great difficult for simulating the "worst-worst case" detection delay. The detection delay is then approximated in that we let $\nu$ be $1,2,\ldots,10$ and take the maximal one as the ``worst-worst case" detection delay. Obviously, the simulated detection delay is equal or less than the actual one associated with the DE-CuSum algorithm.
The simulation result is shown in Fig. \ref{Fig:censorDECuSum} and we can see that while the detection delay of the two schemes are approximately the same when the available energy is big enough ($\epsilon>0.5$), the censoring strategy has considerably less detection delay than the DE-CuSum algorithm when the available energy is severely limited.

\begin{figure}
  \centering
  \includegraphics[width=3in]{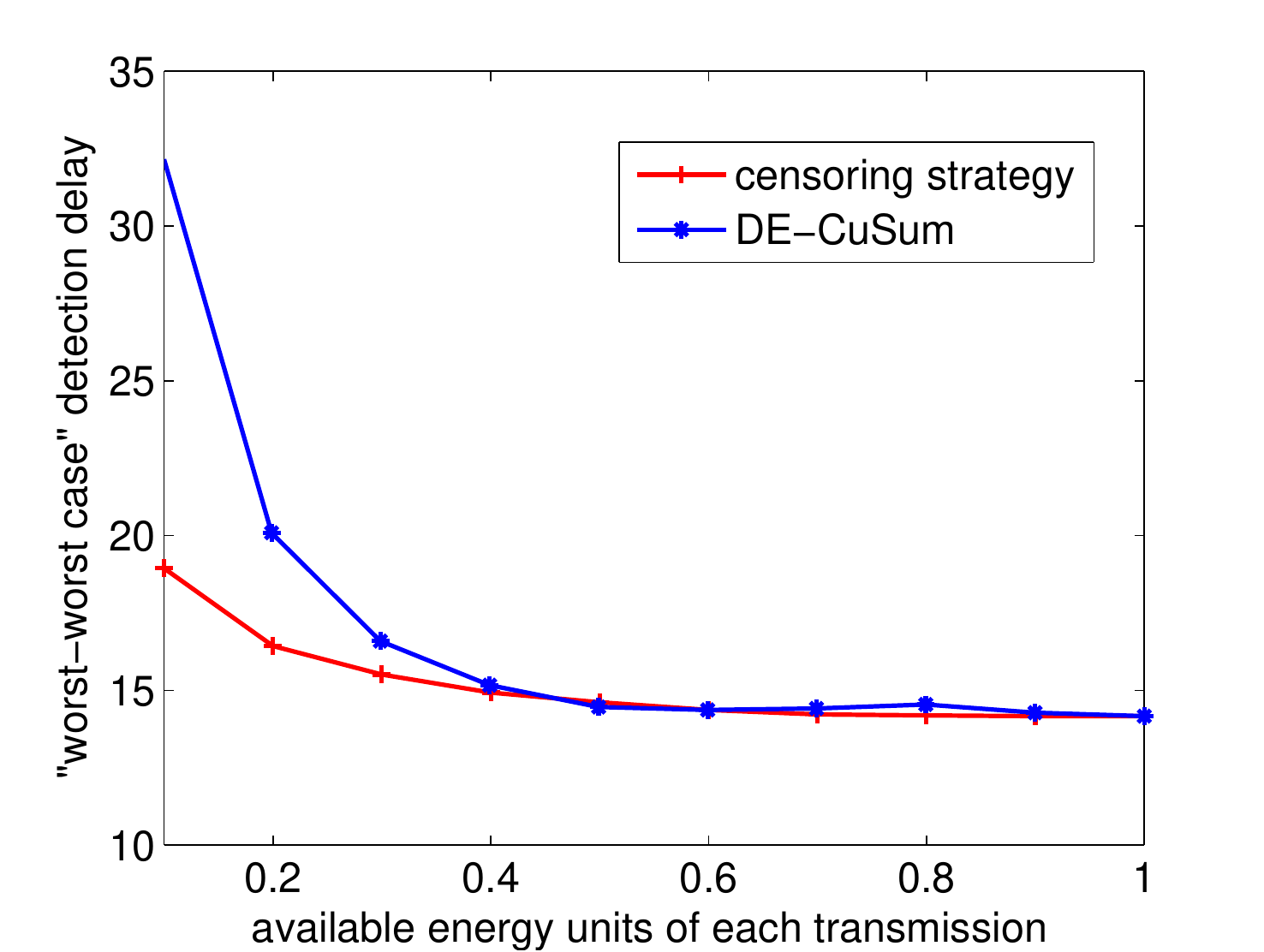}\vspace{-2mm}
  \caption{ Detection delay $\mathcal{D}^{\mu}_L(T)$ of different observation transmission scheme as a function of available average energy units of each transmission $\epsilon$. } \label{Fig:censorDECuSum}
  \vspace{-4mm}
\end{figure}

\emph{Example 3.} A typical evolution of the detection statistic $S_k$ for the CuSum algorithm with censoring strategy, random policy and the DE-CuSum algorithm is shown in Fig. \ref{Fig:TimeSeries}. The scenario where the energy is severely limited, i.e., $\epsilon=0.1$, is simulated. From \eqref{Eqn:approximuDECuSum}, to meet the energy constraint, $\mu$ is set to be $0.056$ for the DE-CuSum algorithm. To keep the ARL of the different schemes around $6500$, the threshold for censoring strategy, random policy and the DE-CuSum algorithm is set to be $690, 101$ and $98$, respectively. The change event is assumed to happen at $\nu=20$. As depicted in the figure, the censoring strategy has least detection delay. Though this evolution is just one realization of the three algorithms, it can provide insights of the reason why the censoring strategy outperforms the DE-CuSum algorithm. Note that there are deterministic increment periods for the DE-CuSum algorithm, i.e., when the detection statistic $S_k<0$, it increases with $\mu$ at each time instant regardless of the observations. Therefore, if change event happens during these periods, the DE-CuSum algorithm cannot respond to the change quickly enough. What is more, when the energy is severely limited, most of the time before the change event happens the DE-CuSum algorithm will be undergoing the deterministic increment periods, so it is very likely that the change event just happens during these periods. On the contrary, the censoring strategy is event-triggered. If the observation contains sufficient information indicating the change event (the observation lies out of the \emph{no send} region), it will be delivered to the decision maker in time. Even if the observation is not sent, the decision maker still can get the rough information of the observation (the likelihood ratio of the whole observations belong to the \emph{no send} region).

\begin{figure}
  \centering
  \includegraphics[scale=0.48]{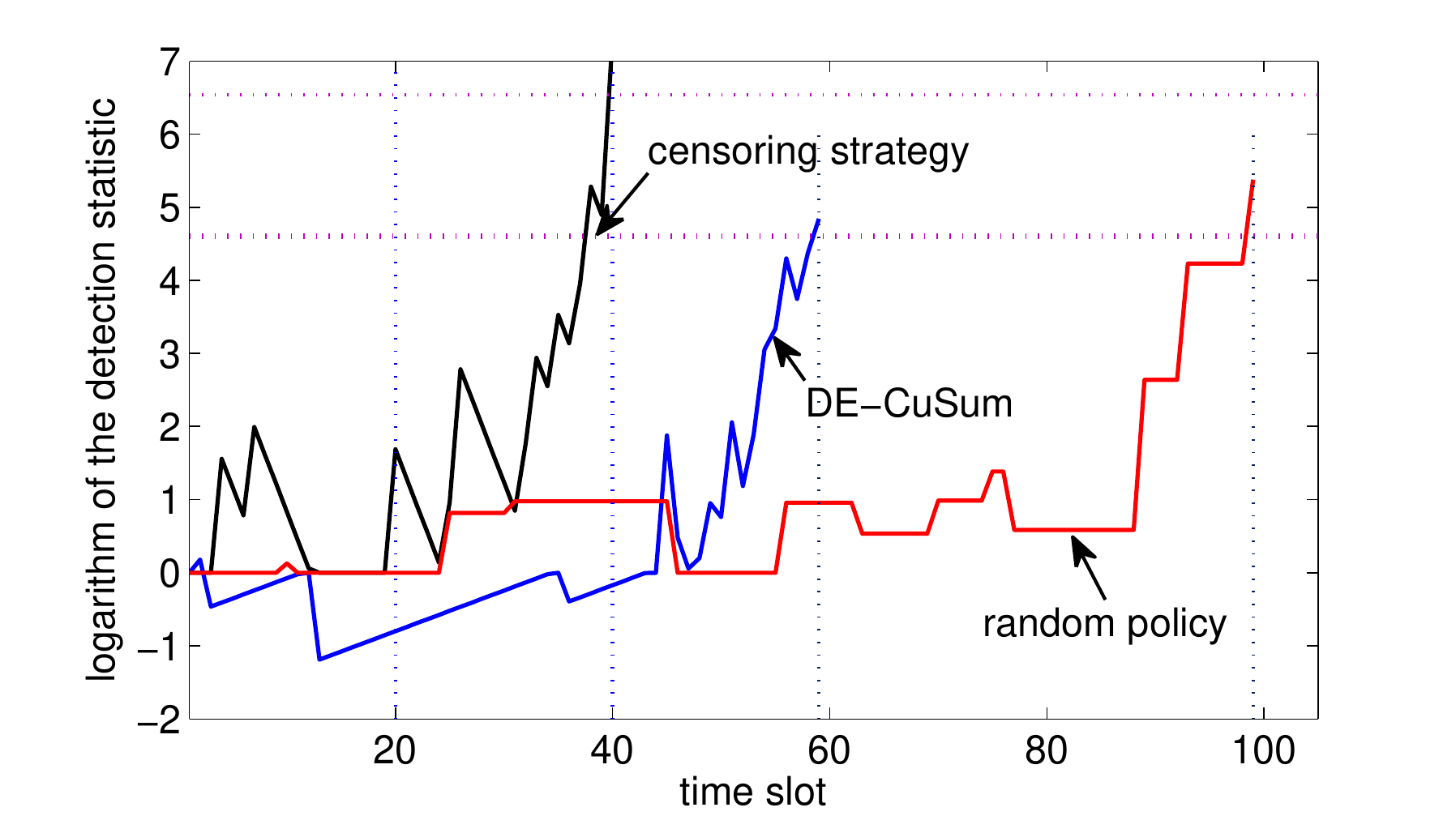}\vspace{-1mm}
  \caption{ Typical evolution of the CuSum algorithm with censoring strategy, the random policy and DE-CuSum algorithm when $\epsilon=0.1$. } \label{Fig:TimeSeries}
  \vspace{-4mm}
\end{figure}

\emph{Example 4.} The censoring strategy is now coupled with the SRP detection procedure. As in the first example, the random policy is used for comparison. Note that the SRP procedure has a randomized initial point and there is no analytic expression of the underlying distribution defined in \eqref{Eqn:QuasiStaDistri}. Hence it is impossible to simulate the SRP procedure through Monte Carlo experiments. We resort to the techniques developed in \cite{moustakides2009numerical}, i.e., solving a system of integral equations, to obtain the performance metrics: the ARL and the ``worst case" conditional average delay, $\mathcal{D}^{\mu}_P(T)$. The sample density for the integration interval $[0,A]$ is set to be $0.1$. In the scenarios, the ARL is kept to be around $1500$ by adjusting the threshold $A$. The simulation results are shown in Fig. \ref{Fig:SRPcensorrandom}. As depicted in the figure, the censoring strategy significantly outperforms the random policy. We also should note that the censoring strategy has a very good trade-off curve in itself: when $\epsilon=0.1$, the detection delay is only $3$ more time slots than that in the scenario where there is no energy constraint ($\epsilon=1$).

\begin{figure}
  \centering
  \includegraphics[width=3in]{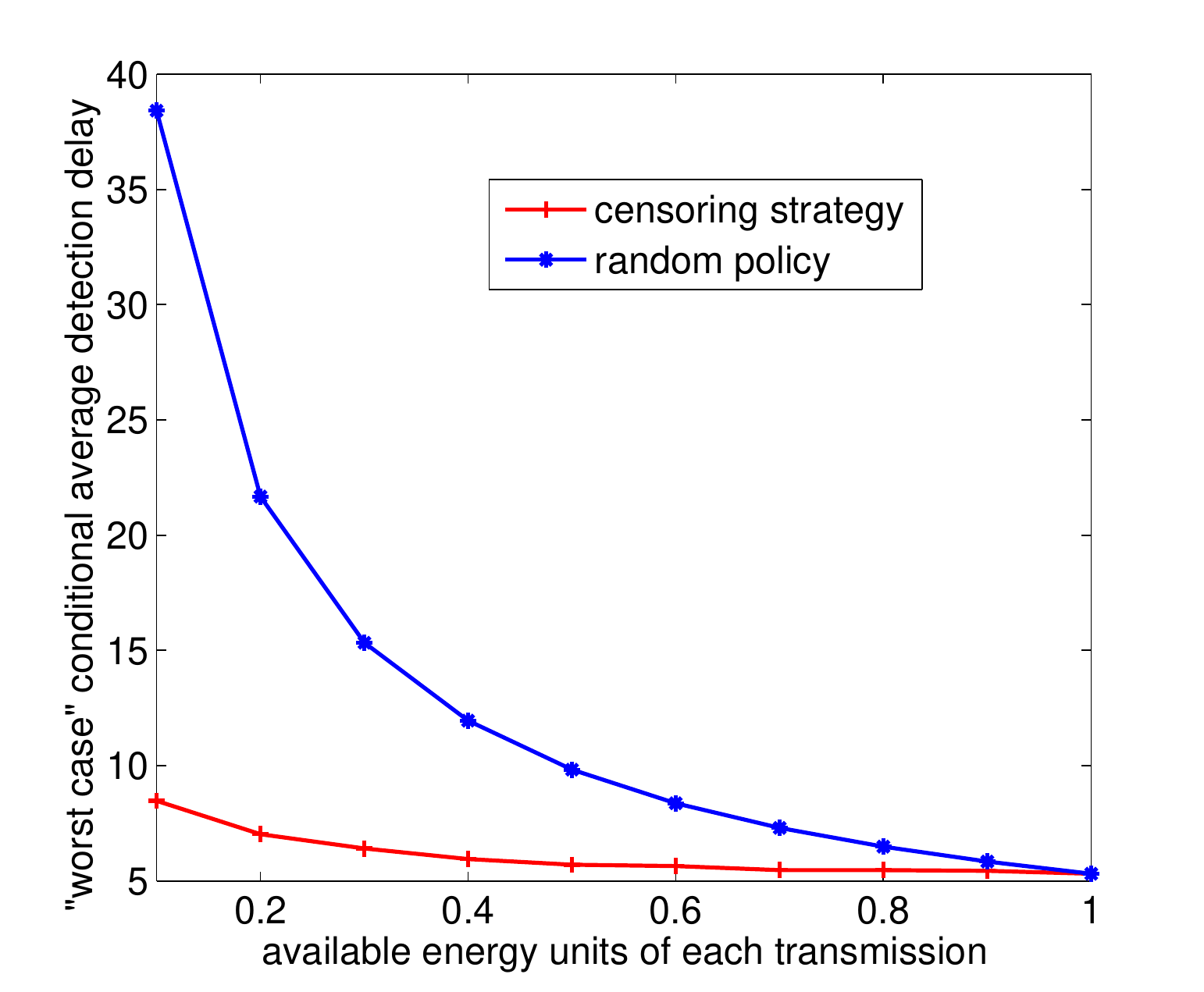}\vspace{-2mm}
  \caption{ Detection delay $\mathcal{D}^{\mu}_P(T)$ of different observation transmission scheme as a function of available average energy units of each transmission $\epsilon$. } \label{Fig:SRPcensorrandom}
  \vspace{-4mm}
\end{figure}

In the remainder, we show how we benefit from the two conditions (the two necessary properties the optimal censoring strategy has) in developing the numerical algorithm that finds the optimal censoring strategy. Note that in our case, the likelihood ratio function $L(x)=\frac{f_1(x)}{f_0(x)}$ is continuous and monotone w.r.t. $x$, which together with the likelihood-ratio-based property imply that not only the likelihood ratio of the observation but also the observation itself in the optimal censoring region is a single interval. In other words, the censoring region has the following form
\begin{align}
\theta_k = \left\{
\begin{array}{ll}
0, & \text{if }  x\in[a,b_a],\\
1, & \text{otherwise},
\end{array} \right.
\end{align}
where
\begin{align}
\int_a^{b_a}f_0(x)\mathrm{d}x=1-\epsilon.  \label{Eqn:ExampleEqualCons}
\end{align}
The numerical algorithm works as follows. As $F_0(-3.5)=1-F_1(3.5)\thickapprox0.0002$, the algorithm focus on the truncated interval $[-3.5,4.5]$ instead of $\mathbb{R}$. The real interval is discretized equally over the range with density $\delta=0.001.$ Given an energy constraint $\epsilon$, $a$ varies over these equidistant points, and the corresponding $b_a$ can be uniquely determined by the equation \eqref{Eqn:ExampleEqualCons}. For each candidate pair $(a,b_a)$, the associated K-L divergence is computed using the numerical integration technique Simpson's rule \cite{tallarida1987area}. Among all the candidate pairs, the one that has the maximal K-L divergence is the optimal one. The total time cost of our algorithm for the cases when $\epsilon=\{0.1,0.2,\ldots,1\}$ is around $0.11$ seconds. However, if there were not these two conditions, the number of intervals for the censoring region could be any positive integer and the ``size" of the censoring region could be any real value less than $\epsilon$. Therefore it would be impossible to come up with an efficient algorithm.



\section{Conclusion and Future Work} \label{Section:Conclusion}

In this paper, we studied the problem of quickest change detection in the minimax setting (both Lorden's and Pollak's formulation) in a scenario where the observations are collected by a sensor with limited energy. To deal with the energy constraint, the sensor adopts a censoring strategy, i.e., the senor only sends the observations that fall into a certain region to the decision maker. We proved that the censoring strategy that maximizes the post-censoring K-L divergence coupled with the CuSum algorithm and SRP detection procedure is asymptotic optimal, when the ARL goes to infinity, for the Lorden's and Pollak' setting, respectively. Simulation results demonstrated a considerably better performance than a random policy and the DE-CuSum algorithm. In general, to find the optimal censoring strategy can only be done numerically. We provided two properties for the optimal censoring strategy, which can be utilized to significantly reduce the computation load.

For the future work, there are multiple interesting directions: studying whether the censoring strategy that has the maximal post-censoring K-L divergence coupled with the CuSum algorithm is strictly optimal for the Lorden's problem; exploring the problem with multiple sensor nodes; and investigating time-varying censoring strategies, which may depend on the detection statistic.

\bibliographystyle{IEEETran}
\bibliography{xq_reference}

\begin{thebibliography}{10}
\providecommand{\url}[1]{#1}
\csname url@samestyle\endcsname
\providecommand{\newblock}{\relax}
\providecommand{\bibinfo}[2]{#2}
\providecommand{\BIBentrySTDinterwordspacing}{\spaceskip=0pt\relax}
\providecommand{\BIBentryALTinterwordstretchfactor}{4}
\providecommand{\BIBentryALTinterwordspacing}{\spaceskip=\fontdimen2\font plus
\BIBentryALTinterwordstretchfactor\fontdimen3\font minus
  \fontdimen4\font\relax}
\providecommand{\BIBforeignlanguage}[2]{{%
\expandafter\ifx\csname l@#1\endcsname\relax
\typeout{** WARNING: IEEEtran.bst: No hyphenation pattern has been}%
\typeout{** loaded for the language `#1'. Using the pattern for}%
\typeout{** the default language instead.}%
\else
\language=\csname l@#1\endcsname
\fi
#2}}
\providecommand{\BIBdecl}{\relax}
\BIBdecl

\bibitem{lai1995sequential}
T.~L. Lai, ``Sequential changepoint detection in quality control and dynamical
  systems,'' \emph{Journal of the Royal Statistical Society. Series B
  (Methodological)}, pp. 613--658, 1995.

\bibitem{shiryaev2002quickest}
A.~N. Shiryaev, ``Quickest detection problems in the technical analysis of the
  financial data,'' in \emph{Mathematical Finance—Bachelier Congress
  2000}.\hskip 1em plus 0.5em minus 0.4em\relax Springer, 2002, pp. 487--521.

\bibitem{thottan2003anomaly}
M.~Thottan and C.~Ji, ``Anomaly detection in ip networks,'' \emph{IEEE
  Transactions on Signal Processing}, vol.~51, no.~8, pp. 2191--2204, 2003.

\bibitem{cardenas2009evaluation}
A.~A. C{\'a}rdenas, S.~Radosavac, and J.~S. Baras, ``Evaluation of detection
  algorithms for mac layer misbehavior: theory and experiments,''
  \emph{IEEE/ACM Transactions on Networking}, vol.~17, no.~2, pp. 605--617,
  2009.

\bibitem{rice2010flexible}
J.~A. Rice, K.~Mechitov, S.-H. Sim, T.~Nagayama, S.~Jang, R.~Kim, B.~F.
  Spencer~Jr, G.~Agha, and Y.~Fujino, ``Flexible smart sensor framework for
  autonomous structural health monitoring,'' \emph{Smart structures and
  Systems}, vol.~6, no. 5-6, pp. 423--438, 2010.

\bibitem{jayaprakasam2009sequential}
A.~Jayaprakasam and V.~Sharma, ``Sequential detection based cooperative
  spectrum sensing algorithms in cognitive radio,'' in \emph{First UK-India
  International Workshop on Cognitive Wireless Systems (UKIWCWS)}, 2009, pp.
  1--6.

\bibitem{Shiryaev1963}
A.~Shiryaev, ``On optimum methods in quickest detection problems,''
  \emph{Theory of Probability and Its Applications}, vol.~8, no.~1, pp. 22--46,
  1963.

\bibitem{shiryaev2007optimal}
A.~N. Shiryaev, \emph{Optimal stopping rules}.\hskip 1em plus 0.5em minus
  0.4em\relax Springer, 2007.

\bibitem{lorden1971procedures}
G.~Lorden \emph{et~al.}, ``Procedures for reacting to a change in
  distribution,'' \emph{The Annals of Mathematical Statistics}, vol.~42, no.~6,
  pp. 1897--1908, 1971.

\bibitem{pollak1985optimal}
M.~Pollak, ``Optimal detection of a change in distribution,'' \emph{The Annals
  of Statistics}, pp. 206--227, 1985.

\bibitem{polunchenko2012state}
A.~S. Polunchenko and A.~G. Tartakovsky, ``State-of-the-art in sequential
  change-point detection,'' \emph{Methodology and Computing in Applied
  Probability}, vol.~14, no.~3, pp. 649--684, 2012.

\bibitem{veeravalli2012quickest}
V.~V. Veeravalli and T.~Banerjee, ``Quickest change detection,'' \emph{arXiv
  preprint arXiv:1210.5552}, 2012.

\bibitem{Veeravalli2001}
V.~Veeravalli, ``Decentralized quickest change detection,'' \emph{IEEE
  Transactions on Information Theory}, vol.~47, no.~4, pp. 1657--1665, May
  2001.

\bibitem{mei2011quickest}
Y.~Mei, ``Quickest detection in censoring sensor networks,'' in \emph{2011 IEEE
  International Symposium on Information Theory Proceedings (ISIT)}, 2011, pp.
  2148--2152.

\bibitem{banerjee2012dataBayesian}
T.~Banerjee and V.~V. Veeravalli, ``Data-efficient quickest change detection
  with on--off observation control,'' \emph{Sequential Analysis}, vol.~31,
  no.~1, pp. 40--77, 2012.

\bibitem{banerjee2012data}
T.~Banerjee and V.~Veeravalli, ``Data-efficient quickest change detection in
  minimax settings,'' \emph{IEEE Transactions on Information Theory}, vol.~59,
  no.~10, pp. 6917--6931, Oct 2013.

\bibitem{Geng2013}
J.~Geng and L.~Lai, ``Non-bayesian quickest change detection with stochastic
  sample right constraints,'' \emph{IEEE Transactions on Signal Processing},
  vol.~61, no.~20, pp. 5090--5102, Oct 2013.

\bibitem{premkumar2008optimal}
K.~Premkumar and A.~Kumar, ``Optimal sleep-wake scheduling for quickest
  intrusion detection using wireless sensor networks,'' in \emph{Proceedings of
  the International Conference on Computer Communications IEEE (INFOCOM)},
  2008, pp. 1400--1408.

\bibitem{dargie2010fundamentals}
W.~Dargie and C.~Poellabauer, \emph{Fundamentals of wireless sensor networks:
  theory and practice}.\hskip 1em plus 0.5em minus 0.4em\relax John Wiley \&
  Sons, 2010.

\bibitem{rago1996censoring}
C.~Rago, P.~Willett, and Y.~Bar-Shalom, ``Censoring sensors: A
  low-communication-rate scheme for distributed detection,'' \emph{IEEE
  Transactions on Aerospace and Electronic Systems}, vol.~32, no.~2, pp.
  554--568, 1996.

\bibitem{appadwedula2008decentralized}
S.~Appadwedula, V.~V. Veeravalli, and D.~L. Jones, ``Decentralized detection
  with censoring sensors,'' \emph{IEEE Transactions on Signal Processing},
  vol.~56, no.~4, pp. 1362--1373, 2008.

\bibitem{Tay2007}
W.~P. Tay, J.~Tsitsiklis, and M.~Win, ``Asymptotic performance of a censoring
  sensor network,'' \emph{IEEE Transactions on Information Theory}, vol.~53,
  no.~11, pp. 4191--4209, Nov 2007.

\bibitem{veeravalli2001decentralized}
V.~V. Veeravalli, ``Decentralized quickest change detection,'' \emph{IEEE
  Transactions on Information Theory}, vol.~47, no.~4, pp. 1657--1665, 2001.

\bibitem{geng2013bayesian}
J.~Geng, E.~Bhayraktar, and L.~Lai, ``Bayesian quickest change point detection
  with sampling right constraints,'' \emph{arXiv preprint arXiv:1309.5396},
  2013.

\bibitem{page1954continuous}
E.~Page, ``Continuous inspection schemes,'' \emph{Biometrika}, pp. 100--115,
  1954.

\bibitem{moustakides1986optimal}
G.~V. Moustakides, ``Optimal stopping times for detecting changes in
  distributions,'' \emph{The Annals of Statistics}, vol.~14, no.~4, pp.
  1379--1387, 1986.

\bibitem{ritov1990decision}
Y.~Ritov, ``Decision theoretic optimality of the cusum procedure,'' \emph{The
  Annals of Statistics}, pp. 1464--1469, 1990.

\bibitem{lai1998information}
T.~L. Lai, ``Information bounds and quick detection of parameter changes in
  stochastic systems,'' \emph{IEEE Transactions on Information Theory},
  vol.~44, no.~7, pp. 2917--2929, 1998.

\bibitem{tartakovsky2012third}
A.~G. Tartakovsky, M.~Pollak, and A.~S. Polunchenko, ``Third-order asymptotic
  optimality of the generalized shiryaev--roberts changepoint detection
  procedures,'' \emph{Theory of Probability \& Its Applications}, vol.~56,
  no.~3, pp. 457--484, 2012.

\bibitem{kullback1978information}
S.~Kullback, \emph{Information theory and statistics}.\hskip 1em plus 0.5em
  minus 0.4em\relax Courier Dover Publications, 1978.

\bibitem{siegmund1985sequential}
D.~Siegmund, \emph{Sequential analysis: tests and confidence intervals}.\hskip
  1em plus 0.5em minus 0.4em\relax Springer, 1985.

\bibitem{tsitsiklis1993extremal}
J.~N. Tsitsiklis, ``Extremal properties of likelihood-ratio quantizers,''
  \emph{IEEE Transactions on Communications}, vol.~41, no.~4, pp. 550--558,
  1993.

\bibitem{cover2006elements}
T.~M. Cover and J.~A. Thomas, \emph{Elements of information theory}.\hskip 1em
  plus 0.5em minus 0.4em\relax John Wiley \& Sons, 2006.

\bibitem{poor2009quickest}
H.~V. Poor and O.~Hadjiliadis, \emph{Quickest detection}.\hskip 1em plus 0.5em
  minus 0.4em\relax Cambridge University Press, 2009.

\bibitem{moustakides2009numerical}
G.~V. Moustakides, A.~S. Polunchenko, and A.~G. Tartakovsky, ``A numerical
  approach to performance analysis of quickest change-point detection
  procedures,'' \emph{arXiv preprint arXiv:0907.3521}, 2009.

\bibitem{tallarida1987area}
R.~J. Tallarida and R.~B. Murray, ``Area under a curve: Trapezoidal and
  simpson's rules,'' in \emph{Manual of Pharmacologic Calculations}.\hskip 1em
  plus 0.5em minus 0.4em\relax Springer, 1987, pp. 77--81.

\end{thebibliography}

\end{document}